\title{Learning What's Going On: Reconstructing Preferences and Priorities from Opaque Transactions}
\let\chapter\section
\newcommand{\argmax}{\textrm{argmax}}
\newtheorem{thm}{Thm}[section]
\newtheorem{lemma}[thm]{Lemma}
\newtheorem{corollary}[thm]{Corollary}
\newtheorem{observation}[thm]{Observation}
\newcommand{\s}{\succ}
\newcommand{\sh}{\widehat{\succ}}
\newcommand{\sg}{\tilde{\succ}}
\newcommand{\M}{\ensuremath{\mathcal{A}}\xspace}
\newcommand{\MB}{\ensuremath{\mathcal{MB}}\xspace}
\newcommand{\demt}{\ensuremath{\textit{d}^t}\xspace}
\newcommand{\perm}[1]{\ensuremath{\texttt{Permutation}(#1)\xspace}}
\newcommand{\init}{\ensuremath{\texttt{InitPerm}\xspace}}
\newcommand{\demote}[2]{\ensuremath{\texttt{Demote}(#1, #2)\xspace}}
\newcommand{\ith}[3]{\ensuremath{\texttt{Buyer}(#1, #2, #3)}\xspace}
\newcommand{\loc}[2]{\ensuremath{\texttt{Loc}(#1, #2)\xspace}}
\newcommand{\level}[2]{\ensuremath{\texttt{Level}(#1, #2)\xspace}}
\newcommand{\firstmistake}[3]{\ensuremath{\texttt{FirstMistake}(#1,#2,#3)}\xspace}
\author[1]{Avrim Blum\thanks{Supported in part by the National Science Foundation under grants CCF-1101215, CCF-1116892, CCF-1331175, and IIS-1065251.  Email: {\tt avrim@cs.cmu.edu}}}
\author[2]{Yishay Mansour\thanks{This research was supported in part by The Israeli Centers of Research Excellence (I-CORE) program,
(Center  No. 4/11), by a grant from the Israel Science Foundation (ISF), by a grant from United
  States-Israel Binational Science Foundation (BSF), and by a grant
  from the Israeli Ministry of Science (MoS). Email: {\tt mansour@tau.ac.il}}}
\author[1]{Jamie Morgenstern\thanks{Supported in part by the National Science Foundation under grants  CCF-1116892, CCF-1331175 and IIS-1065251 and by a Simons Award for Graduate Students in Theoretical Computer Science.  Email: {\tt jamiemmt@cs.cmu.edu}}}
\affil[1]{Computer Science Department, Carnegie Mellon University}
\affil[2]{Tel Aviv University and Microsoft Research, Hertzelia}
\begin{document}
\maketitle

\begin{abstract}
  We consider a setting where $n$ buyers, with combinatorial
  preferences over $m$ items, and a seller, running a priority-based
  allocation mechanism, repeatedly interact.  Our goal, from observing
  limited information about the results of these interactions, is to
  reconstruct both the preferences of the buyers and the mechanism of
  the seller.  More specifically, we consider an online setting where
  at each stage, a subset of the buyers arrive and are allocated
  items, according to some unknown priority that the seller has among
  the buyers. Our learning algorithm observes only which buyers arrive
  and the allocation produced (or some function of the allocation,
  such as just which buyers received positive utility and which did
  not), and its goal is to predict the outcome for future subsets of
  buyers. For this task, the learning algorithm needs to reconstruct
  both the priority among the buyers and the preferences of each
  buyer.  We derive mistake bound algorithms for additive, unit-demand
  and single minded buyers. We also consider the case where buyers'
  utilities for a fixed bundle can change between stages due to
  different (observed) prices.  Our algorithms are efficient both in
  computation time and in the maximum number of mistakes (both
  polynomial in the number of buyers and items).
\end{abstract}

\thispagestyle{empty}
\newpage
\clearpage
\setcounter{page}{1}

\section{Introduction}
A collection of lobbyists enter a politician's office.  An hour later,
they emerge, some happy and some unhappy.  The next day, a different
subset of lobbyists enter, and again some emerge happy and some unhappy.
Suppose that what is happening is that the politician has a collection
of $m$ favors (items) to distribute, along with a priority ordering
over lobbyists; the lobbyists are single-minded, each lobbyist $i$
with a demand-set $D_i \subseteq \{1,\ldots,m\}$.  The politician
orders the lobbyists who arrived that day by priority and hands each
one $D_i$ if it is still available (making the lobbyist happy) or
handing her nothing if $D_i$ is no longer available (making the
lobbyist unhappy).  Can we reconstruct the politician's priority
ordering and the lobbyists' demand-sets (or at least, given a set of
lobbyists, predict which will end up happy and which unhappy, since we
cannot observe the items themselves) from these types of observations?

Or, in the context of computational advertising,
consider a publisher that owns a web site and has some collection of
advertisers. The advertisers tell the publisher which potential impressions are
relevant to them (are they interested in this type of impression), their bid (the value they are 
willing to pay for a relevant impression) 
and conflicts (which competing advertisers they refuse to appear concurrently with, 
for example, competing car makers for a car ad).
Each time a user visits a webpage, the publisher's ad server considers the subset of 
relevant advertisers. It then orders the advertisers (say, by their bid) and greedily 
assigns an impression to an advertiser if it does not introduce a conflict (otherwise 
it skips this advertiser).
%
From observing which
advertisements are shown and which are not, and knowing which advertisers are relevant,
can we learn the conflicts
and priority ordering?

In this paper, we consider this and several closely related problems.
Formally, we assume there are $n$ buyers (lobbyists or advertisers)
and a mechanism (the politician or ad server) who has a
priority ordering over buyers that is unknown to us.  There is a
collection of $m$ items, and the buyers each have utility functions over
subsets of items (e.g., the examples above correspond to the case of
single-minded buyers\footnote{In the case of advertisers, each
  pairwise conflict can be modeled as an abstract item that belongs to
  the demand-set of both conflicting advertisers.}).  At each
time-step $t$, some set $S^t \subseteq \{1,\ldots, n\}$ of buyers
arrive.  The mechanism then orders the buyers in $S^t$ by priority and
allocates to each its most-preferred bundle from the collection of
items not yet given to earlier buyers in the ordering.  Finally, we
observe some function $y^t$ of the outcome (allocation).  We will consider the case
that buyers are single-minded and $y^t$ indicates which buyers received
positive utility and which did not (as in the examples above), as well
as the case that buyers are unit-demand or additive, and $y^t$
indicates the items (if any) that each buyer received.  The goal of
our algorithm will be to predict $y^t$ from $S^t$, and we will present efficient
algorithms that can do so while making only a bounded (polynomial in
$n$ and $m$) number of mistakes in total.

Notice that the setting of single-minded buyers can exhibit
significant non-monotonicities.  For example, consider two lobbyists
$a$ and $b$ whose demand sets $D_a$ and $D_b$ do not overlap, and with
$a$ having higher priority than $b$.  Depending on the other lobbyists
who show up, it could be that $a$'s presence has no effect on $b$
(since their sets don't overlap); it could be that $a$'s presence {\em
  helps} $b$ (if there is a lobbyist $c$ present, with priority
between $a$ and $b$, such that $D_a \cap D_c \neq \emptyset$ and $D_c
\cap D_b \neq \emptyset$); or, it could be that $a$'s presence {\em
  hurts} $b$ (if there are lobbyists $d,e$ present with $D_a \cap D_d
\neq \emptyset$, $D_d \cap D_e \neq \emptyset$, and $D_e \cap D_b \neq
\emptyset$, with ordering $a\s d \s e \s b$).

To get a feel for the type of results we are aiming for, we describe
here a simpler case of this problem and how one can solve it. Suppose
buyers are additive rather than single-minded,\footnote{For each buyer
  $i$ and each item $j$, either buyer $i$ either wants $j$ or she
  doesn't, and buyer $i$ takes all items that she wants that are
  available when it is her turn.  One can think of this as the
  behavior of additive buyers in the presence of fixed prices.}  and
$y^t$ denotes the \emph{allocation} of items to the agents in
$S^t$. This problem is monotonic in some sense: if $S^{t'} \subseteq
S^t$ and $i\in S^{t'}$, then $y_i^t \subseteq y_i^{t'}$ (including
more buyers reduces the allocation for $i$). It is possible to solve
this problem tracking two things: first, for a given buyer $i$, track
the set of items $i$ has ever won, and second, an estimate the
relative ordering of the buyers $\sh$. Consider some item $j$ that buyer $i$
wins in some round. Buyer $i$ will take item $j$ whenever it is still available,
so if buyer $i$ doesn't win item $j$, we learn that buyer $i$ is later in the ordering
than the winner of item $j$.  To predict the allocation for a set $S^t$, we
order buyers in $S^t$ according to $\sh$, and in that order, give the
buyers all of the remaining items she has bought before.

Our algorithm will make two types of mistakes, and we can limit the
number of each.  Consider the first buyer (according to $\sh$) for
whom we make a mistake in predicting her allocation.  Suppose she
won some item $j$ that we did not predict she would get. Since she was
the first mistake according to $\sh$, we did not predict that someone
earlier in $\sh$ won item $j$. Thus, item $j$ was available in our prediction
when we reached buyer $i$: we did not allocate item $j$ to her because we had never
seen her win item $j$ before. There are at most $n m$ of these sorts of
mistakes to make (one per item/buyer pair). Suppose instead we predicted
that some item $j$ would be allocated to buyer $i$ but she did not get item $j$. Since
we predicted item $j$ for buyer  $i$, buyer $i$ must have won item $j$ before (and is
therefore interested in item $j$). Then, it must be the case that buyer $i$ is
later in the true ordering than in $\sh$: the winner of item $j$ must be
earlier than buyer $i$. Then, we can update $\sh$ by demoting buyer $i$: if done
carefully, as we describe in Section~\ref{sec:single}, $i$ will never
be demoted further than her true position in the ordering, so there
will be at most $n^2$ mistakes of this type.

\subsection{Our Results}

This paper presents several mistake-bound learning algorithms for
ordered arrival mechanisms. The crux of these algorithms is to learn
the hidden \emph{priority order}, or permutation over buyers, in a way
that meshes well with learning the players' preferences at the same
time, all in a mistake-bound framework. First, we consider the case
without prices (or equivalently, when prices are fixed across
time). In the case of a single item, this problem reduces to learning
the priority order over buyers. Previous work describes how one can
efficiently sample linear extensions of partial
orders~\citep{karzanov1991conductance}, which can be combined with a
simple halving algorithm to learn a permutation with a mistake bound
of $\Theta(n\log(n))$ when mistakes are accompanied with some pair
$i,j$ which were mis-ordered (see Section \ref{sec:discussion}). When buyers have more general
valuations, however, it is not clear how to use this algorithm to learn the priority order
over buyers.\footnote{In the case of a single item, we learn that the true
  winner has higher priority than everyone else. In general, mistakes
  don't give such a simple constraint on the ordering of
  buyers.} So, we use a simpler algorithm for learning a permutation
whose mistake bound is $\Theta(n^2)$ (when a mistake is accompanied by
some element of the permutation that needs to be demoted rather than a
pair of elements for which the permutation was incorrect).  With this
algorithm in hand, we build mistake-bound learning algorithms for
single-minded, unit-demand, and additive buyers with fixed prices,
and for unit-demand and additive buyers with variable, observable
prices. The precise form of these bounds is summarized in
Table~\ref{table:results}. The results for additive and unit-demand
buyers also apply to the case where there are multiple copies of
goods.

\begin{figure}
\begin{center}
\begin{tabular}{|c|c|c|}
\hline
Type of buyers & Prices (fixed or variable) & Mistake Bound\\\hline
Single-minded & fixed & $O(n^2)$\\ \hline
Additive & fixed & $O(nm + n^2)$\\ \hline
Unit-demand & fixed & $O(n^2m\log(m))$\\ \hline
Additive & variable & $O(n^2m\log(V))$\\ \hline
Unit-demand & variable & $O(n^2\MB)$\\ \hline
\end{tabular}
\end{center}
\caption{Summary of our results; $V$ is the maximum value any buyer
  has for an item and $\MB$ is the mistake bound for the Ellipsoid
  algorithm. Our algorithm for single-minded buyers applies even to
  the case where observations are only which buyers get their set and
  which do not, rather than the explicit
  allocation.\label{table:results} }
\end{figure}

These results look quite a bit like combining online decision list
learning \citep{warmuth1990lists} combined with other forms of
mistake-bound learning algorithms: in the case of unit-demand without
prices, we learn the decision list order over buyers and, for each
buyer, their preference order over items; in the case of additive
buyers without prices, we learn the order over buyers and which
items each person is interested in; in the settings with prices, we
use binary search or ellipsoid-style learning algorithms to understand
the players' numeric values.

\subsection{Related Work}

Our work is related to the literature on learning from revealed
preferences~\citep{samuelson}\citep{varian2006revealed}, which
considers the problem of learning about a single buyer from observing
her behavior under observed prices.  Rationalizable demands (according
to some prices) are those which can arise from maximizing some
concave, monotone, continuous value function subject to a budget
constraint. \citet{beigman2006learning} gave algorithms with finite
sample complexity bounds for learning and predicting from
rationalizable demand/price pairs.~\citet{zad2012} gave
computationally efficient versions of these algorithms for linear and
linearly separable utilities.~\citet{amin2014revealed} consider both
the problem of setting prices (minimizing regret w.r.t revenue) and
the prediction problem (minimizing classification mistakes
w.r.t. exogenous prices) for the online version of this
problem.~\citet{balcan2014revealed} give tight sample complexity
results for predicting linear, separable piecewise linear concave,
CES, and Leontif preferences, and extend the results to the agnostic
setting. Their results extend to indivisible goods and certain
nonlinear pricing.

Our work differs from the work on revealed preferences in two key
ways. One is that in the settings we consider, each observed
transaction involves multiple buyers: as a result, a buyer might not
get some item because it was not in her desired bundle, or might not
get it because the item (or a complement to it) was taken by some other
buyer. The other is that we also, at the same time, are aiming to
learn an unknown priority ordering for the seller.  The algorithms we
derive use as subroutines algorithms that are closely related to those
for learning decision lists in a mistake bound
setting~\citet{warmuth1990lists}. We also employ a mistake bound
learner for classification by halfspaces from~\citet{maass1990fast}
for efficiently learning a classifier for a fixed unit-demand buyer.

\section{Model and Preliminaries}

Let $B$ be a set of $n$ buyers and $I$ be a set of $m$ items. Each
buyer $i\in B$ has some combinatorial valuation $v_i$ over subsets of
items $T\subseteq I$.An \emph{ordered arrival mechanism} $\M$ consists
of an ordering over buyers $\s$ and allocates items in $I$ as follows.
At each time $t$, an arbitrary subset $S^t$ of the $n$ buyers arrives
online. Then, in order according to $\s$, each buyer in $S^t$ chooses
the bundle from the remaining items that maximizes her value.  So, the
buyer $i_1$ in $S^t$ who is first according to $\s$ chooses the bundle
$X^t_{i_1} \subseteq I$ of maximum value to her, then the buyer $i_2$
in $S^t$ who is second according to $\s$ chooses the bundle $X^t_{i_2}
\subseteq I \setminus X^t_{i_1}$ of maximum value to her, and so on.

\newcommand{\obs}{\texttt{obs}}

The \emph{label} $y^t$ for the example $S^t$ is some function $\obs$
of the allocation $(X^t_1, \ldots, X^t_n) = \M(S^t)$ which arises from
this process. Our goal will be to predict $y^t = \obs(\M(S^t))$ for a
new subset $S^t$ given our previous observations. 
 We will focus on $\obs = Id$ (the identity
function; our goal is to predict the allocation) and $\obs =
(\mathbb{I}[X^t_1 \neq \emptyset], \ldots, \mathbb{I}[X^t_n \neq
\emptyset])$ (the function which indicates which buyers bought at
least one item; our goal is to predict which players have positive
utility).

We will be working in the {\bf mistake-bound model}, where our
learning algorithm $\M$ will progress as follows. In each round, the
algorithm is presented with a subset $S^t$. The algorithm's current
prediction, $\M(S^t)$, will be output.  Then, the algorithm observes
the true label $y^t = \obs(X^t_1, \ldots, X^t_n)$.  If $\M(S^t) \neq
y^t$ (the label predicted is incorrect), round $t$ is counted as a
{\bf mistake}.  The goal in mistake-bound learning is to bound the
worst-case total number of mistakes made over an arbitrarily long
sequence of examples presented to the algorithm. We will call this
worst-case bound the {\bf mistake bound for learning algorithm \M}.

We also consider an extension of this setting, where each example is a
pair $(S^t, p^t)$ of a subset of buyers and a price vector $p^t \in
\mathbb{R}^m$. Then, in order according to $\s$, the buyers in $S^t$
each chooses a bundle from the remaining items maximizing her {\em
  utility}.  We assume that their utility is \emph{quasi-linear in
  money}, i.e., that $u_i(X^t_i, p^t) = v_i(X^t_i) - \sum_{j\in
  X^t_i}p^t(j)$ (where $X^t_i\subseteq I$ represents the bundle player
$i$ chose). We call this model the \emph{variable-price model}; the
previous model, described without prices, can be thought with a fixed
price vector $p$ which does not vary across examples. We also note
that, for several of the problems we study, our algorithms extend to
the case where for each item $e \in I$, there are $k_e$ \emph{copies}
of that item (here, we assume buyers are unit-demand in each item, so
they will never purchase more than one copy of a given item).

\section{Single-Minded Buyers}\label{sec:single}

Suppose there are $n$ single-minded buyers. That is, each buyer $i$
has a single demand set $D_i$ for which she has value $v_i(D_i)>0$.
At each time $t$, a subset $S^t\subseteq[n]$ arrives, and each
buyer in $S^t$ is offered the items which haven't been taken by
earlier buyers. If buyer $i$ is offered some set which includes $D_i$,
she will take $D_i$, otherwise she will take nothing (one can imagine
infinitesimal prices giving a disincentive to 
take excess items). Let $W^t$ denote
the set of buyers who won their demand sets in example $S^t$ (so $W^t
= \obs(\M(S^t))$). From $W^t$ and $S^t$ alone, we wish to be able to
predict the winning set $W^{t'}$ for a new subset $S^{t'}$, where our
performance objective is the number of mistakes we make.

Suppose for a moment our learning algorithm knew $\s$ but not the
demand sets $D_i$. From $\s$ alone, we cannot predict the sets
$W^{t'}$ given $S^{t'}$: we need to understand which buyers have items
in common between their demand sets (e.g., if $D_i \cap D_j \neq
\emptyset$, then $i$ and $j$ will not simultaneously be in any winning
set). For this reason, we will call $i$ and $j$ \emph{in conflict} if
$D_i \cap D_j \neq \emptyset$. Consider the graph $G(V,E)$ where
$V=[n]$ and there is an edge $(i,j)\in E$ iff buyers $i$ and $j$ have
a conflict: we will call this the \emph{conflict graph}.  For each
pair of buyers $(i,j)$, if $i, j\in W^t$ for some $S^t$, the two
buyers clearly do not have a conflict. Notice that, even if $i \s j$,
there will be cases when $j\in W^t$ but $i \notin W^t$ (if $i$ has a
conflict, say, with an earlier winning buyer, but $j$ does not). Thus,
to predict whether buyer $i$ will win, it is not sufficient to know
simply who has precedence over $i$ and who conflicts with $i$: we need
more complete information about the structure of the conflict graph.

Notice that, given the conflict graph $G$ and the ordering over buyers
$\s$, we can predict the winners $W^t$ for an arbitrary subset $S^t$.
In particular, buyer $i$ wins exactly when no one before $i$ won with
whom $i$ has a conflict.  So, we can predict the entire winning set by
scanning through $S^t$ in the order $\s$ and adding buyer $i$ to $W^t$
if no conflicting buyer is already in $W^t$.  Thus, we are done if we
can learn both $G$ and $\s$.

We start by presenting a mistake-bound procedure \ref{alg:all}
(informally alluded to in the introduction) to learn a permutation
$\s$ in a model where each time it makes a mistake on its current
guess $\sh^t$, it is told some item $d^t$ that is incorrectly above
some other item $r$ in $\sh^t$ (but is not told $r$).  The procedure
employs a datastructure $P=(\sg,O)$, where $\sg$ is a permutation and
$O$ is a partition of the buyers to levels.  Given a permutation
$\sigma$, let $\loc{i}{\sigma}$ denote the position of $i$ in $\sigma$
and let $\ith{k}{S}{\sigma}$ denote the name of the $k$th element in
$\sigma$ restricted to $S$.  The datastructure at time $t$ is $P^t$ and
we will associate $P^t$ with its output permutation $\sh^t$, so that
$\loc{i}{P^t} = \loc{i}{\sh^t}$.  The key interface with $P^t$ will be
the function $\demote{d^t}{P^t}$, employed when $d^t$ is returned as a
mistake (an item that is incorrectly above some other item $r$ in
$\sh^t$).  The algorithm is quite similar to that used for learning
decision lists in a mistake bound
setting~\citep{warmuth1990lists}. The following lemma is used
throughout the rest of our analysis.

\begin{algorithm}[t]
\SetAlgoRefName{PermULearn}
\Indp Let $\sg$ be an arbitrary ordering for tiebreaking\;
\Indm * \init, put all items at level $1$ *\\
\Indp $O_1 = \{1, \ldots, n\}$\;
$O_{2, \ldots, n} = \emptyset$\;
$\sh^0 = \sg$\;
\Indm * $\perm{P^t}$, outputs consistent permutation *\\
\Indp
\For{$l = 1$ to $|O|$} {
  $\pi_l$ = Order $O_l$, level $l$ of $O$, according to $\sg $\;
}
 $\sh^t = \pi_1 \cdot \pi_2 \cdot \ldots \pi_l$ the concatenation of the levels' permutations\;\
\Indm * $\demote{i}{P^t}$, demote buyer $i$ *\\
\Indp  Let $k$ be the level in which $i$ resides, e.g. $i \in O_k$\;
Let $O'_k = O_k \setminus \{i\}, O'_{k+1} = O_{k+1} \cup \{i\}$\;
Let $O' = (O_1, \ldots, O_{k-1}, O'_k, O'_{k+1}, O_{k+1}, \ldots)$\;
Let $P^{t+1}  = (\sg, O')$\;
$\sh^{t+1} = \perm{P^{t+1}}$\;
\caption{Maintains level ordering\label{alg:all}}
\end{algorithm}

\begin{lemma}\label{lem:perm-use}
  \ref{alg:all}, below, has a mistake-bound of $O(n^2)$ for learning a
  permutation $\s$ in a model where if $\sh^t \neq \s$, it is given
  some $\demt \in [n]$ such that $\exists r$ such that
$\demt \; \sh^t \; r$ but $r \s d$.
\end{lemma}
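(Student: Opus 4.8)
The plan is to exhibit an invariant linking the internal state $P^t=(\sg,O)$ to the true order $\s$ that is monotone under \demote, and then count demotions. The invariant I would use is that no buyer is ever pushed strictly below its true rank: for every buyer $i$ and every time $t$,
\[
 \level{i}{P^t} \;\le\; \loc{i}{\s}.
\]
Since every mistake triggers exactly one call to $\demote{\demt}{P^t}$, which raises $\level{\demt}{\cdot}$ by one and leaves all other buyers' levels unchanged, this invariant caps how many times any fixed buyer can be demoted, which is what ultimately yields the $O(n^2)$ bound.

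First I would prove the invariant by induction on $t$. After \init every buyer sits at level $1\le\loc{i}{\s}$, so the base case is immediate. For the inductive step, the only level that changes is that of $\demt$, which moves from its current level $k$ to $k+1$; hence it suffices to show $k<\loc{\demt}{\s}$, so that afterward $\level{\demt}{P^{t+1}}=k+1\le\loc{\demt}{\s}$. To get at $k$ I would use the feedback: a mistake hands us $\demt$ together with the existence of an $r$ such that $\demt\;\sh^t\;r$ but $r\s\demt$.

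The hard (and essentially only) step is to convert this single order violation into a bound on $k$. Because \perm orders buyers first by level and breaks ties within a level by $\sg$, the relation $\demt\;\sh^t\;r$ forces $\level{r}{P^t}\ge\level{\demt}{P^t}=k$ (whether $r$ lies in a deeper level than $\demt$ or in the same level but later under $\sg$). Meanwhile $r\s\demt$ gives $\loc{r}{\s}<\loc{\demt}{\s}$. Chaining these with the inductive hypothesis applied to $r$ yields
\[
 k \;\le\; \level{r}{P^t} \;\le\; \loc{r}{\s} \;<\; \loc{\demt}{\s},
\]
so $k<\loc{\demt}{\s}$ and the demotion keeps $\demt$ at or above its true rank, closing the induction.

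Finally I would count. Each buyer $i$ starts at level $1$ and, by the invariant, never exceeds level $\loc{i}{\s}\le n$, so it can be demoted at most $\loc{i}{\s}-1\le n-1$ times. Summing the demotions over all $n$ buyers, and using that mistakes are in bijection with demotions (when $\sh^t=\s$ no mistake occurs, so no demotion is requested), bounds the total number of mistakes by $\sum_{i}(\loc{i}{\s}-1)=\binom{n}{2}=O(n^2)$, as claimed.
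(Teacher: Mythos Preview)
Your proof is correct and follows essentially the same route as the paper: the identical invariant $\level{i}{P^t}\le\loc{i}{\s}$, the same chain $k\le\level{r}{P^t}\le\loc{r}{\s}<\loc{\demt}{\s}$ via the inductive hypothesis applied to the witness $r$, and the same counting of demotions. Your write-up is in fact slightly tighter, extracting the exact $\binom{n}{2}$ bound rather than the paper's coarser $n^2$.
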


\begin{proof}
  Let \level{i}{O} be the level of buyer $i$ in $O$, i.e.,
  $\level{i}{O}=j$ where $i\in O_j$. Let \demt be the element given
  (and demoted by~\ref{alg:all}) at time $t$.
We show by induction on the algorithm's pushing down elements that
no $\demt$ is pushed to a level below her location in $\s$, i.e.,
$\loc{i}{\s}\geq \level{i}{O^t}$. If this is the case, the algorithm
is correct: at most $n^2$ pushes can occur and the limit of these
push-downs is some consistent permutation. Prior to any elements
being pushed down, all elements are at the first level, hence
$\loc{i}{\s}\geq \level{i}{O^0}$ initially.

Now, suppose that the inductive hypothesis holds at time $t-1$,
i.e.,
for any item $i$ we have $\loc{i}{\s} \geq \level{i}{O^{t-1}}$.
%
Our induction hypothesis implies two things: first, that
$\loc{\demt}{\s}\geq \level{\demt}{O^{t-1}}$ and second,
$\loc{r}{\s} \geq \level{r}{O^{t-1}}$

Our assumption states that when $\demt$ is demoted, there is some
element $r$ such that $\loc{\demt}{\s} > \loc{r}{\s}$ but
$\loc{r}{P^t} > \loc{\demt}{P^t} \geq \level{\demt}{O^{t-1}}$. The
second implication of our induction hypothesis states, prior to the
$t$th demotion,
$\loc{r}{\s} \geq \level{r}{O^{t-1}}$. Since $\loc{r}{P^t} >
\loc{\demt}{P^t}$, it must be the case that $\level{r}{O^{t-1}} \geq
\level{\demt}{O^{t-1}}$ ($r$ is only given a later location in $P^t$
than $\demt$ if she is at a weakly larger-numbered level). Thus,
$\loc{\demt}{\s} > \loc{r}{\s} \geq \level{r}{O^{t-1}} \geq
\level{\demt}{O^{t-1}}$, so $\loc{\demt}{\s} > \level{\demt}{O^{t-1}}$
and pushing $\demt$ to level $\level{\demt}{O^{t-1}}+1$ maintains the
invariant. So the algorithm is correct, and no element is pushed more
than $n$ times, implying a mistake bound of $n^2$.
\end{proof}

Thus,~\ref{alg:all} is guaranteed to learn a consistent ordering,
so long as we never request a demotion of a buyer who shouldn't be
demoted. So, when we use this procedure, it suffices to
show that we never demote a buyer $i$ unless there is some buyer $i'$
later in $P^t$ but earlier in $\s$ to guarantee that we learn a
consistent ordering with at most $O(n^2)$ mistakes.  We mention briefly
that this can be used directly to solve the problem of predicting $f=
I$ (the allocation is the label) for additive buyers. 

\begin{corollary}\label{cor:add}
  There is an algorithm with mistake bound $O(nm + n^2)$ for learning
  the allocation rule of an ordered arrival mechanism when buyers are
  additive.
\end{corollary}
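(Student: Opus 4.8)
The plan is to maintain two pieces of state and predict greedily. For each buyer $i$, keep a set $\widehat{D}_i$ of items that $i$ has been observed to receive in some past round; since buyers are additive with fixed prices, any item $i$ ever takes is an item $i$ always wants, so $\widehat{D}_i \subseteq D_i$ and $\widehat{D}_i$ only grows over time. For the priority order, I would run \ref{alg:all} on the buyers, maintaining its datastructure $P^t$ and its output permutation $\sh^t$. To predict the allocation for $S^t$, process the buyers of $S^t$ in the order $\sh^t$ and hand each buyer $i$ every item of $\widehat{D}_i$ that has not yet been assigned to an earlier buyer.

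To bound mistakes, on any round whose predicted allocation is wrong I would examine the first buyer $i$, in the order $\sh^t$, whose predicted bundle differs from her true bundle $X^t_i$. There are two ways this can happen. First, $i$ may truly win an item $j$ that we did not give her. Because $i$ is the first mistake, every buyer ahead of $i$ in $\sh^t$ is predicted correctly, so any item we predicted for an earlier buyer was actually taken by that buyer; hence $j$ could not have been predicted for anyone before $i$ (else two buyers would truly hold $j$), so $j$ was still available in our prediction when we reached $i$. The only remaining reason we withheld $j$ is that $j \notin \widehat{D}_i$, so I would add $j$ to $\widehat{D}_i$. Since this enlarges $\widehat{D}_i$ and each $\widehat{D}_i \subseteq D_i \subseteq I$ only ever grows, at most $nm$ such additions can occur over all buyers and items.

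Second, we may have predicted that $i$ receives an item $j$ she did not truly win. Predicting $j$ for $i$ means $j \in \widehat{D}_i$ (so $i$ genuinely wants $j$) and that no buyer ahead of $i$ in $\sh^t$ was predicted to take $j$. Since $i$ wants $j$ but did not receive it, some buyer $r$ with $r \s i$ took $j$. This $r$ cannot precede $i$ in $\sh^t$: if it did, its prediction would be correct and we would have predicted $r$ takes $j$, leaving $j$ unavailable for $i$ — a contradiction. Hence $i \; \sh^t \; r$ while $r \s i$, which is exactly the condition needed to invoke \demote{i}{P^t} with witness $r$. By Lemma~\ref{lem:perm-use} such demotions are always valid and at most $n^2$ of them occur. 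The crux of the argument, and the step I expect to be the main obstacle, is precisely this: ruling out the possibility that $j$ was taken by a buyer we had simply not yet learned wants $j$, which is what forces the blocking buyer $r$ to lie behind $i$ in $\sh^t$ and thereby certifies a legitimate demotion.

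Finally, I would observe that every mistaken round performs at least one of these two updates at its first-mistake buyer, that distinct rounds consume distinct events (each item-addition happens once per pair $(i,j)$, and the demotions are bounded in total by Lemma~\ref{lem:perm-use}), and that there are at most $nm$ additions and $n^2$ demotions overall. Hence the number of mistaken rounds is at most $O(nm) + O(n^2) = O(nm + n^2)$. The algorithm is efficient, since both prediction and each update touch only the $O(nm)$ bits of maintained state together with the permutation datastructure.
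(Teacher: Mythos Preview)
Your proposal is correct and follows essentially the same approach as the paper's proof: maintain per-buyer item sets (the paper's $B[i][e]$, your $\widehat{D}_i$) together with the \ref{alg:all} permutation datastructure, analyze the first-mistake buyer according to $\sh^t$, and charge each mistaken round either to a new item-addition (at most $nm$) or to a demotion justified by Lemma~\ref{lem:perm-use} (at most $n^2$). Your argument is in fact slightly more explicit than the paper's in verifying that the missing item $j$ was still available when the prediction reached $i$, and that the blocking buyer $r$ must lie behind $i$ in $\sh^t$.
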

\begin{proof}
  Consider the following algorithm. Initiate a permutation
  datastructure $P$, and also a $n\times m$-dimensional vector
  $B[i][e] = 0$. For a given subset $S^t$, in order according to $P$'s
  current permutation, the algorithm predicts that agent $i$ is
  allocated every item which is both still available and for which
  $B[i][e] = 1$. There are two types of errors which are
  made. Consider $i$, the first buyer (according to $P$) for which we
  made an error.  If $i$ wins some item $e$ which our algorithm did
  not predict, then set $B[i][e] = 1$ (it must have been previously
  $B[i][e] = 0$, or we would have allocated $e$ to $i$). If $i$ did
  not win some element $e$ that we predicted her to win, it must be
  the case that $B[i][e] = 1$, $\loc{i}{P} < \loc{i'}{P}$, and
  $\loc{i'}{\s} < \loc{i}{\s}$ for some $i'$ (namely, that $i'$ that
  won $e$). Thus, by Lemma~\ref{lem:perm-use}, it is valid to demote
  $i$. There are at most $nm$ errors of the first type and at most
  $n^2$ of the second type.
\end{proof}

We also briefly mention that this permutation datastructure is enough
to learn in the simple case that $f = I$ (the allocation is the label) for single-minded buyers.

\begin{corollary}\label{cor:single-id}
  There is an efficient algorithm for predicting $f=I$ (the
  allocation) for subsets of single-minded buyers with a mistake
  bound of $O(n^2)$.
\end{corollary}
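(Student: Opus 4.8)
The plan is to run the permutation datastructure of Lemma~\ref{lem:perm-use} alongside an explicit record $\hat D_i$ of each buyer's demand set, exploiting the fact that $f = I$ reveals allocations in full: whenever a buyer $i$ wins, we observe exactly the bundle she takes, which is precisely $D_i$. Hence each demand set is learned in a single shot the first time its owner wins, and there are at most $n$ such ``learning'' events over the entire run. Initially every $\hat D_i$ is marked unknown and $P$ places all buyers on level $1$. To predict the allocation on $S^t$, I would scan $S^t$ in the current order $\sh^t = \perm{P^t}$, maintaining the set of still-available items; for each buyer $i$ in turn, predict that $i$ wins $\hat D_i$ (and remove those items) if $\hat D_i$ is known and currently fully available, and predict that $i$ wins nothing otherwise.

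The mistake analysis follows the template of Corollary~\ref{cor:add}: focus on the first buyer $i$, in the order $\sh^t$, on whom the prediction disagrees with reality. Because $i$ is the first disagreement, every buyer preceding $i$ in $\sh^t$ is predicted correctly, so the items our prediction has removed by the time we reach $i$ are exactly the true winnings of those buyers. Single-mindedness leaves only two error types at $i$: (Type A) we predict nothing but $i$ truly wins $D_i$; and (Type B) we predict that $i$ wins $\hat D_i = D_i$ but $i$ truly wins nothing.

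For Type A, either $\hat D_i$ was unknown, in which case we now learn it, an event that happens at most once per buyer; or $\hat D_i = D_i$ was already known. I would rule out the latter as follows: if $i$ truly wins, $D_i$ is disjoint from everything allocated earlier in the true order, so if we nonetheless predicted nothing, some item $e \in D_i$ was removed by a buyer $j$ preceding $i$ in $\sh^t$, and since $j$'s prediction was correct, $j$ truly won $e$ in this same round. But then $j$ and $i$ are two simultaneous winners sharing $e \in D_i \cap D_j$, which is impossible. Thus every Type A error genuinely teaches a new demand set, bounding these by $n$. For Type B, the true blocker supplies a demotion witness: since $i$ wins nothing, some $e \in D_i$ was taken in reality by a buyer $i'$ with $\loc{i'}{\s} < \loc{i}{\s}$; were $i'$ earlier than $i$ in $\sh^t$ it would have been predicted correctly and removed $e$ before we reached $i$, contradicting that we predicted $D_i$ available, so $\loc{i'}{P^t} > \loc{i}{P^t}$. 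This is exactly the condition of Lemma~\ref{lem:perm-use} with $\demt = i$ and witness $i'$, so demoting $i$ is valid and at most $n^2$ such demotions occur.

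The one place demanding care, and the real content of the argument, is precisely this coupling between the two orders $\sh^t$ and $\s$: the set of items consumed ``before $i$'' is taken in the predicted order in one case and in the true order in the other, and the first-mistake property is what forces these two views to agree up to $i$. Granting that, combining the $n$ demand-set discoveries with the $n^2$ valid demotions yields the claimed $O(n^2)$ mistake bound, and every step (sorting by $\sh^t$, greedy allocation, a single demotion or demand-set update) is plainly polynomial, so the algorithm is efficient.
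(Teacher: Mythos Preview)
Your proposal is correct and takes essentially the same approach as the paper: the same algorithm (permutation datastructure plus one-shot learning of $\hat D_i$), the same first-mistake analysis, and the same two error types resolved via Lemma~\ref{lem:perm-use}. Your arguments for why the ``known $\hat D_i$'' Type~A case is impossible and for why a valid demotion witness exists in Type~B are in fact spelled out more carefully than in the paper's own proof, but the structure is identical.
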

\begin{proof}
  Use an instantiation of the permutation datastructure $P$ as
  above. For each buyer $i$, let $\hat D_i = \emptyset$
  initially. Whenever we see a buyer $i$ win a nonempty set, set
  $\hat D_i = X_i = D_i$.  When a subset $S^t$ arrives, in order
  according to $P$, allocate $j\in S^t$ his set $\hat D_j$ if it is
  still available (otherwise, $\hat X_j = \emptyset$). When this
  allocation rule makes a mistake, consider $i^t$, the first buyer
  (according to the ordering $P$) for which our algorithm mis-allocated
  items. There are two possible mistakes: $\hat X_{i^t} = \emptyset$
  but $X_{i^t} = D_{i^t}$, or $X_{i^t} = \emptyset$ but $\hat X_{i^t}
  = D_{i^t}$.  The first case can occur for two reasons: we have never
  seen $i^t$ win, in which case we have $\hat D_{i^t} = \emptyset$, so
  we then set $\hat D_{i^t} = D_{i^t}$ (there are at most $n$ of these
  errors), or because there is some $i'$ such that $i^t \s i'$ but $i'
  \sh^t i^t$ (who conflicts with $i^t$). But this is not possible, or
  $i'$ would be an earlier mistake. The second kind of mistake can
  only occur because there is some $i'$ such that $i' \s i^t $ but
  $i^t \sh^t i'$ ($i^t$ is too early in the permutation). Thus, by
  Lemma~\ref{lem:perm-use}, demoting $i^t$ in these cases is valid
  and leads to a mistake bound of $O(n^2)$. Thus, in total, there are
  $O(n^2)$ mistakes made by this algorithm.
\end{proof}

It remains to show how we can use the permutation datastructure to
solve our original problem, that of learning $\s$ alongside the
conflict graph for single-minded buyers from the examples of winning
sets. The intuition behind our main algorithm is as follows. We will
initialize the permutation datastructure and begin by assuming the
conflict graph is the complete graph. For a given estimate $\sh$ and
conflict graph $\widehat{G}$, we predict $\widehat{W}^t$ for $S^t$ as
follows. The mechanism serves members of $S^t$ in order according to
$\sh$ (e.g, $\loc{i}{P^t} < \loc{j}{P^t}$ will imply $i$ gets served
before $j$), subject to the constraint that if $j$ is in conflict with
some earlier buyer who has won, $j$ doesn't win. Then, there will be
two types of mistakes:
when $W^t$ includes some pair $(i,j)$ connected by an edge in $\widehat{G}$,
and when it does not.  In the first case, we can safely remove $(i,j)$
from $\widehat{G}$, and in the second case we will argue that
we can safely demote some buyer. We will maintain the
invariants alluded to previously: namely, that $E \subseteq
\widehat{E}$ (for edges in the conflict/current estimate graph), and
that we have never demoted a buyer who didn't need to be
demoted. Algorithm~\ref{alg:single} formalizes this intuition.

\begin{algorithm}
  \SetAlgoRefName{SingleMinded}
  P=\init\;
  Let $\widehat{G} = ([n], \widehat{E})$ where $(i,j)\in \widehat{E}$ for all $i \neq j$\; 
  \For {$t = 1$ to $T$} {
    Receive $S^t$\;
   Let $\widehat W^t = \emptyset$\;
   \For{$b= 1$ to $ |S^t|$} {
     Let $i = \ith{b}{S^t}{P}$\;
     add $i$ to $\widehat W^t$ if $\nexists j \in \widehat W^t$ such that $(i,j)\in \widehat E$\;
   }
   Predict $\widehat W^t$\;
   Learn $W^t$\;
   \If{$W^t \neq \widehat W^t$} {
     \eIf{$\exists i,j \in W^t$ such that $(i,j)\in \widehat E$} {$\widehat E = \widehat E \setminus \{(i,j)\}$ }
     {Let $i^t = \ith{1}{\widehat W^t \setminus W^t}{P}$\;
       $P = \demote{i^t}{P}$\;
     }
   }
  }
\caption{MB algorithm for predicting winners;
  single-minded buyers wrt order $\s$}\label{alg:single}
\end{algorithm}

%


\begin{thm}\label{thm:single}
  ~\ref{alg:single} is a $2n^2$-mistake bound algorithm for predicting
  $W^t$, the winning set for single-minded buyers.
\end{thm}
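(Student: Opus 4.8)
The plan is to track the two mistake types separately --- edge removals and demotions --- and bound each by $n^2$, maintaining throughout the two invariants hinted at in the text: $(i)$ $E \subseteq \widehat E$ (every true conflict survives in the estimate), and $(ii)$ every call to \demote{i^t}{P} is valid in the sense required by Lemma~\ref{lem:perm-use}, i.e. there is some $r$ with $\loc{i^t}{P} < \loc{r}{P}$ but $r \s i^t$. Given these, the counting is immediate: since $\widehat E$ starts as the complete graph and only ever loses edges, at most $\binom{n}{2}$ rounds can trigger the edge-removal branch; and by Lemma~\ref{lem:perm-use} together with $(ii)$, at most $n^2$ rounds can trigger the demotion branch. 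As every mistaken round executes exactly one of the two branches, the total is at most $\binom{n}{2} + n^2 \le 2n^2$.

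Establishing $(i)$ is short: the edge-removal branch fires on a pair $i,j \in W^t$, and two buyers that both win cannot conflict (the one earlier in $\s$ would have taken a shared item of its demand set, blocking the other), so $(i,j)\notin E$; removing it preserves $E\subseteq\widehat E$, and edges are never added back.

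The crux is $(ii)$: showing that whenever the algorithm reaches the else-branch and demotes $i^t = \ith{1}{\widehat W^t \setminus W^t}{P}$, a valid witness $r$ exists. First I would observe that $i^t \in \widehat W^t \setminus W^t$ means $i^t$ was predicted to win but truly lost; since $i^t$ lost in the real run served in order $\s$, some earlier (in $\s$) true winner $i'$ took an item of $D_{i^t}$, giving $i' \in W^t$, $i' \s i^t$, and $(i',i^t)\in E \subseteq \widehat E$ by invariant $(i)$. I claim $i'$ is the desired witness, i.e. $\loc{i^t}{P} < \loc{i'}{P}$. Suppose instead $\loc{i'}{P} < \loc{i^t}{P}$. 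If $i' \in \widehat W^t$, then $i'$ is an earlier predicted winner conflicting with $i^t$ in $\widehat E$, so the prediction loop would never have added $i^t$ to $\widehat W^t$ --- a contradiction. Hence $i' \in W^t \setminus \widehat W^t$, so in the prediction run $i'$ was itself blocked by some earlier-in-$P$ predicted winner $c \in \widehat W^t$ with $(c,i')\in\widehat E$. Since $c$ precedes $i'$, hence $i^t$, in $P$, and $i^t$ is the \emph{first} element of $\widehat W^t\setminus W^t$, we must have $c\in W^t$. But then $c,i'\in W^t$ with $(c,i')\in\widehat E$, which would have triggered the edge-removal branch rather than the demotion branch --- contradicting that we are in the else-branch. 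Thus $\loc{i^t}{P} < \loc{i'}{P}$ with $i'\s i^t$, so $r=i'$ certifies the demotion as valid.

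I expect this last paragraph to be the main obstacle, since it is precisely where the choice of $i^t$ as the first mis-predicted winner, the else-branch guard (no conflicting pair inside $W^t$), and the invariant $E\subseteq\widehat E$ must all be combined; each alone is insufficient to rule out the bad case $\loc{i'}{P}<\loc{i^t}{P}$. The remaining bookkeeping --- that exactly one branch executes per mistaken round, and that the two counts sum to at most $2n^2$ --- is routine.
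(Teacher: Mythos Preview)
Your proposal is correct and follows essentially the same two-invariant structure as the paper: maintain $E\subseteq\widehat E$, show every demotion is valid in the sense of Lemma~\ref{lem:perm-use}, and then count at most $\binom{n}{2}$ edge removals plus at most $n^2$ demotions.

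The one place where your argument is actually more careful than the paper's is in establishing invariant~$(ii)$. The paper simply asserts that $i^t$ is ``the first person according to $\widehat\s^{t-1}$ where we made a mistake'' and concludes from this that the buyer who blocked $i^t$ in the true run must sit later in $P$. But the algorithm only selects $i^t$ as the first element of $\widehat W^t\setminus W^t$, not of the full symmetric difference, so this claim requires justification. Your case analysis --- if the blocking true winner $i'$ were earlier in $P$ but not a predicted winner, chase back to its predicted blocker $c$, observe $c\in W^t$ by minimality of $i^t$, and then $c,i'\in W^t$ with $(c,i')\in\widehat E$ contradicts the else-branch guard --- is exactly what is needed to close this gap. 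So your version of the proof is, if anything, a bit tighter than the paper's, but the route is the same.
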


\begin{proof}
  Throughout the life of the algorithm, two invariants are
  maintained. First, the true set of edges in the conflict graph $E$
  will always be contained in $\widehat E$ the set of conflicts the
  algorithm tracks. Second, $\widehat \s $ will only be told to push down
  a buyer $i$ when there is some $j$ such that $\loc{i}{\s} >
  \loc{j}{\s}$ but $\loc{i}{P^t} < \loc{j}{P^t}$.

  We proceed to show the first invariant holds. It begins with the
  complete conflict graph $\widehat G = ([n], \widehat E)$, so the
  invariant holds at the beginning.  Whenever the algorithm deletes an
  edge $(i,j)$ from $\widehat G$, an example has been observed where
  two buyers are clearly not in conflict (e.g., $i$ and $j$ are both
  allocated in some example).

  Now, we prove the second invariant. This is clearly true when we
  push some $i$ down the first time; $i$ doesn't always win when he
  shows up, implying he isn't at the first level according to
  $\s $. Now, suppose so far this has been the case: no buyer so
  far has been pushed unless there was proof according to $\s $
  that he is below someone below him according to
  $\widehat\s ^{t-1}$. Then, when $i^t$ is asked to be pushed down, it
  is because $i^t$ wasn't allocated to, even though $\widehat\s ^{t-1}$
  said he should have been. This isn't because of conflicts, by
  invariant $1$, so $i^t$ didn't conflict with those above him
  according to $\widehat\s ^{t-1}$.  Moreover, since $i^t$ was \emph{the
    first person according to $\widehat\s ^{t-1}$ where we made a
    mistake}, $i^t$ must conflict with someone above him according to
  $\s $, implying there is someone below him in $\widehat\s ^{t-1}$
  who he is below in $\s $. Thus, the invariant is maintained after
  $i^t$ is demoted.

  Thus, by Lemma~\ref{lem:perm-use}, using the permutation
  datastructure is appropriate: it is never told to push down some
  buyer who doesn't need to be lower according to $\s $, so the
  algorithm is correct.  Finally, there can be at most $n^2$ edges
  deleted from the conflict graph, and at most $n^2$ times where some
  buyer is pushed down in the ordering. Thus, the mistake bound on
  this algorithm is $2n^2$.
\end{proof}

Thus, this is quite an effective way to interpret the observations
of ``satisfied'' or ``not satisfied'': if the observation was
actually the allocation (and the goal to predict the allocation),
the problem trivially reduces to the $n^2$ bound from the
single-item case (it reduces to learning the priority of each buyer
and seeing each buyer win one time). If, on the other hand, the
observations are simply the subset of those buyers who are
satisfied, and we aim to construct the smallest consistent model of
the items corresponding to the demanded sets, the problem becomes
NP-complete.
\begin{observation}
  Given a conflict graph $G$, finding the smallest $m$ for which 
 single-minded buyers over $m$ items suffices to describe the
 conflicts in $G$ is equivalent to clique edge-cover, and is thus
  NP-complete. On the other hand, there will always exist a consistent set of at most
  $n^2$ items: in particular, one item for each edge in $G$ with each
 player wanting all of its incident edges.
\end{observation}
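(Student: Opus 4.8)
The plan is to establish the two claims separately: the existence of a representation using at most $n^2$ items, and an exact equivalence between the minimum number of items and the edge clique cover number of $G$, from which NP-completeness follows.

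For the upper bound I would exhibit the representation directly. Introduce one item $x_e$ for every edge $e=(i,j)\in E$, and set $D_i=\{\,x_e : e \text{ is incident to } i\,\}$. Since $G$ is simple, the only edge incident to both $i$ and $j$ is $(i,j)$ itself, so $D_i\cap D_j\neq\emptyset$ holds exactly when $(i,j)\in E$; hence this representation realizes precisely the conflicts of $G$. The number of items is $|E|\le\binom{n}{2}<n^2$ (isolated vertices, if demand sets are required nonempty, can each be given a private item, keeping the total $O(n^2)$), which gives the claimed bound.

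For the equivalence, the key idea is a correspondence between items and cliques of $G$. First I would show that any consistent representation yields an edge clique cover with one clique per item: for each item $\ell$, let $C_\ell=\{\,i : \ell\in D_i\,\}$ be the buyers who want item $\ell$. Any two buyers in $C_\ell$ share item $\ell$, so by consistency they are adjacent, making $C_\ell$ a clique; and every edge $(i,j)\in E$ forces $D_i\cap D_j\neq\emptyset$, so $i$ and $j$ jointly want some item $\ell$, covering $(i,j)$ by $C_\ell$. Thus $m$ items give an edge clique cover of size $m$. Conversely, from an edge clique cover $Q_1,\ldots,Q_k$ of $G$, I would create $k$ items and set $D_i=\{\,\ell : i\in Q_\ell\,\}$. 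The crucial check is that the induced conflict graph equals $G$ \emph{exactly}: if $(i,j)\in E$ then the edge lies in some $Q_\ell$, so $\ell\in D_i\cap D_j$; and if $(i,j)\notin E$ then no $Q_\ell$, being a clique, contains both $i$ and $j$, so $D_i\cap D_j=\emptyset$. Hence the minimum number of items coincides with the edge clique cover number of $G$.

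NP-completeness then follows because deciding whether the edge clique cover number is at most $k$ is a classical NP-complete problem, and the correspondence above is a polynomial-time, value-preserving reduction in both directions. The step I expect to require the most care is the reverse direction of the equivalence, namely ensuring that no \emph{non}-edge of $G$ accidentally becomes a conflict: this is exactly where it is essential that each covering set $Q_\ell$ is a clique rather than an arbitrary vertex subset, so that a shared item can arise only between adjacent buyers. A minor bookkeeping point is the handling of isolated vertices and the nonemptiness convention on demand sets, which is absorbed by awarding each such buyer a private item at negligible cost to the counts.
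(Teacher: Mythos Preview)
Your proposal is correct and follows essentially the same approach as the paper: the bijection between items and cliques (buyers wanting a given item form a clique; conversely, a clique becomes an item shared by its members) is exactly the reduction the paper uses, and your edge-by-edge construction for the $n^2$ upper bound is precisely the one stated in the observation. Your write-up is in fact more careful than the paper's, explicitly verifying that non-edges do not become spurious conflicts and handling isolated vertices.
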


\section{Unit-demand buyers}\label{sec:unit-demand}

Suppose now our $n$ buyers are unit-demand, and we wish to predict
the allocation rather than just the winning set. When prices are
fixed, this problem corresponds to each buyer having an ordering over
items $>_i$, as well as the ordering $\s$ over buyers. At each time
$t$, a subset $S^t$ arrives, and the players in $S^t$, in order
according to $\s $, will each choose their favorite item
remaining. For example, the first buyer in $S^t$ will choose his
favorite item, the second buyer will choose his favorite that the
first buyer didn't take, and so on. By a reduction to the
single-minded case, we have the following.

\begin{thm}\label{thm:unitreduction}
  Algorithm~\ref{alg:unit} is an $O(n^2m^2)$-mistake bound learning
  algorithm for predicting the allocation for subsets of buyers, when
  the allocation occurs according to some fixed permutation on
  unit-demand buyers facing fixed prices, when the observation is the
  true allocation in that setting.
\end{thm}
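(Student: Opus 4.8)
The plan is to prove the bound by reducing the unit-demand prediction problem to the single-minded winner-prediction problem of Theorem~\ref{thm:single}, so that Algorithm~\ref{alg:single} can be invoked essentially as a black box on a larger derived instance. The key idea is to split each unit-demand buyer into $m$ single-minded ``virtual'' buyers and to use private token items to encode the unit-demand constraint.

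Concretely, I would replace each real buyer $i$ by $m$ single-minded agents $a_{i,1},\dots,a_{i,m}$, where $a_{i,j}$ has demand set $\{j,c_i\}$: here $j$ is the original item and $c_i$ is a fresh token item private to buyer $i$. This yields a single-minded instance over $nm$ agents and $m+n$ items. First I would verify that this virtual instance faithfully reproduces the unit-demand mechanism. Two agents $a_{i,j}$ and $a_{i',j'}$ conflict exactly when their demand sets intersect, i.e. when $j=j'$ (two buyers competing for the same original item) or when $i=i'$ (two agents of the same buyer sharing the token $c_i$). Thus the private tokens force at most one agent per real buyer to win, encoding unit-demand, while the shared original items force at most one buyer per item, exactly as in the mechanism. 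The induced order on virtual agents places $a_{i,j}$ before $a_{i',j'}$ iff $i \s i'$, or $i=i'$ and $j >_i j'$; under this order the highest-$>_i$ available item is precisely the one taken by the winning agent of buyer $i$, whose token then blocks all of $i$'s remaining agents. Hence for every arriving set the winning set of virtual agents is in exact bijection with the unit-demand allocation: $a_{i,j}$ wins iff buyer $i$ is allocated item $j$.

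With this correspondence in hand, whenever a real example $S^t$ arrives I would feed the virtual arrival set $\{a_{i,j}:i\in S^t,\ j\in[m]\}$ to Algorithm~\ref{alg:single}, translate its predicted winning set back into a predicted allocation, and on observing the true allocation translate it back into the true virtual winning set. Since predicting the allocation correctly is equivalent to predicting the virtual winning set correctly, the two mistake counts coincide. The virtual instance is a legitimate single-minded instance, so Theorem~\ref{thm:single} applies with $nm$ buyers, giving a mistake bound of $2(nm)^2=O(n^2m^2)$; note that the bound of Theorem~\ref{thm:single} is independent of the number of items, so the $n$ extra token items are free.

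The main obstacle, and the step I would spell out most carefully, is the faithful-simulation claim: I must check that for every arrival order and every history, the single-minded greedy process on the token-augmented instance selects exactly the agent $a_{i,j}$ corresponding to the favorite still-available item of each real buyer $i$, neither skipping it (which could happen only if a token were shared across buyers, hence ruled out by privacy) nor letting a second agent of $i$ win (which the private token prevents). Once this invariant is established the reduction is exact and the mistake bound transfers directly; the remaining work---checking that the block-by-buyer-then-by-preference order is a genuine total order consistent with the observed allocations, so that Lemma~\ref{lem:perm-use} governs the demotions---is routine.
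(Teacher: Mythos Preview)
Your proposal is correct and follows essentially the same reduction as the paper: both create $nm$ ghost agents, one per (buyer, item) pair, with conflicts exactly between agents sharing a buyer or sharing an item, and then invoke Theorem~\ref{thm:single} on the $nm$-agent instance. The only cosmetic difference is that you explicitly realize the conflict structure via private token items $c_i$ in the demand sets, whereas the paper simply asserts the conflict graph directly (which is all Algorithm~\ref{alg:single} needs); the paper also notes in passing that only $\min(n^2,nm)$ ghosts are ever ``relevant,'' yielding the slightly sharper $O(\min(n^4,n^2m^2))$, but this does not affect the headline $O(n^2m^2)$ bound.
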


\begin{algorithm}[t]
\SetAlgoRefName{UnitDemand}
  Let $U$ be an instantiation of Algorithm~\ref{alg:single} for $nm$ buyers $i_{11}, i_{12}, \ldots i_{nm}$\;
  \For {$t = 1$ to $T$} {
    Receive $S^t$\;
    Let $\bar S^t = \{ i_{jk} | j\in S^t, \forall k\in [m]\}$;\ \tcp{Convert players to ghost buyers}
    Let $\bar W^t = U(\bar S^t)$\;   
    \For{$j\in S^t$} {
      Let $\hat W^t(j) = k$, where $i_{jk}\in \bar W^t$;\  \tcp{Convert ghost  winners to allocation}
    }
    Predict $\hat W^t$\;
    Learn $W^t$\;
    \If{$W^t \neq \hat W^t$} {
      Let $\hat{\overline{ W^t}} = \{i_{jk}|W^t(j) = k\}$;\ \tcp{Convert allocation to ghost winners}
      Give $U$ the mistake $\hat{\overline{ W^t}}$\;
    }
}\caption{Predicts allocation of order-based allocation rule
  for unit-demand players\label{alg:unit}}
\end{algorithm}

\begin{proof}
  Consider Algorithm~\ref{alg:single}. For each player $i$, make $m$
  ``ghost buyers'' $i_{1}\ldots i_m$, which will correspond to
  embedding $i$'s preferences into $\s $.  Notice that the true
  allocation mechanism $\M$ can be viewed as an ordering of the $mn$
  ghost buyers (an ordering over buyers, and within each buyer an
  ordering over items) where two ghost buyers are in conflict if
  either they correspond to the same buyer or correspond to the same
  item.  Since only one of the $m$ copies of a given buyer will be
  allocated to according to $\M$, we will never delete conflict edges
  between these copies, and will thus never predict two ghosts
  corresponding to the same true buyer will win simultaneously.
  Similarly, for any item $j$, since the item will never
  simultaneously be given to two buyers, we will never delete conflict
  edges corresponding to that item.  Finally, for a player in position
  $j$ according to $\s $, no more than $j$ of his ghost buyers will
  ever be seen winning. Thus, at most $j$ of his ghosts will need to
  be rearranged by $\hat \s $. In total, then, there are only
  $\min(n^2, nm)$ ghost buyers that are relevant. Then, there are at
  most $O(\min(n^4, n^2m^2))$ mistakes, by Theorem~\ref{thm:single}.
\end{proof}

We also briefly mention that we have a slightly tighter bound (which
is computationally efficient).

\begin{algorithm}[t]
  \SetAlgoRefName{UnitDemandPrime}
  Let $P$ be an instantiation of \ref{alg:all} for $n$ buyers $1,  \ldots n$\;
  \For {$i = 1$ to $m$} {
    Let $\hat \succ_i$ be an instantiation of the approximate halving algorithm (Section \ref{sec:discussion}) for learning $i$'s preference over $[m]$\;
  }
  \For {$t = 1$ to $T$} {
    Let $I = [m]$\;
    Receive $S^t$\;
    \For{$b = 1$ to $|S^t|$} {
      Let $i = \ith{b}{S^t}{P}$\;
      Let $\hat{j}_i = \ith{1}{I}{\hat \succ_i}$; \tcp{the predicted choice of item by $i$}
      Let $\hat{X^t_i} = \{\hat{j}_i\}$\;
      Let $I = I \setminus \{\hat{j}_i\}$\; 
    }
    Predict $\hat X^t$\;
    Learn $X^t$\;
    \If{$X^t \neq \hat X^t$} {
      Let $i^t = \firstmistake{X^t}{\widehat X^t}{P}$;   \tcp{index of first mistake according to $P$}
      Let $\{j_{i^t}\} = X^t_{i^t} $\;
      Give the constraint $j_{i^t} \succ_{i^t} \hat j_{i^t}$ to $\hat \succ_{i^t}$\;
      \If{$\hat \succ_{i^t}$ is infeasible} {
        \demote{i^t}{P}\;
        Reset $\hat \succ_{i^t}$\;
      }
    }
}\caption{Predicts allocation of order-based allocation rule
  for unit-demand players\label{alg:unitprime}}
\end{algorithm}

\begin{thm}\label{thm:unit-improved}
  Algorithm~\ref{alg:unitprime} has a mistake bound of
  $\Theta(n^2m\log(m))$ for predicting the allocation for subsets of
  buyers, when the allocation occurs according to some fixed
  permutation on unit-demand buyers with fixed prices when the
  observation is the true allocation in that setting.
\end{thm}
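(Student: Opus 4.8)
The plan is to prove the $O(n^2 m\log m)$ upper bound by charging every mistake either to a demotion in the permutation datastructure $P$ (bounded through Lemma~\ref{lem:perm-use}) or to genuine progress of one of the per-buyer preference learners, and then to complement it with a matching $\Omega(n^2 m\log m)$ adversary. I would classify each mistaken round by its first-mistake buyer $i^t = \firstmistake{X^t}{\widehat X^t}{P}$ and by whether the constraint $j_{i^t} \succ_{i^t} \hat j_{i^t}$ fed to $\hat\succ_{i^t}$ leaves it feasible (call this Type~1) or makes it infeasible, so that \demote{i^t}{P} fires and $\hat\succ_{i^t}$ is reset (Type~2). The whole argument rests on one claim: every Type~2 demotion is \emph{valid} in the sense Lemma~\ref{lem:perm-use} needs, i.e.\ at the moment of demotion there is some $r$ with $r \s i^t$ yet $\loc{r}{P} > \loc{i^t}{P}$.

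To establish validity I would first prove a single-round statement: if in round $t$ no such witness exists for $i^t$ (every true-predecessor of $i^t$ present in $S^t$ also precedes it in $P$), then the added constraint is \emph{valid}. The key observation is that, because $i^t$ is the first mistake in $P$-order, all of its $P$-predecessors were predicted correctly, so the items they took in reality coincide with the items they took in our prediction; since the true-predecessors of $i^t$ form a subset of its $P$-predecessors, the set of items available to $i^t$ in our prediction is contained in the set available to it in the true run. Hence the predicted item $\hat j_{i^t}$ was genuinely available when $i^t$ selected $j_{i^t}$, which forces $j_{i^t} \succ_{i^t} \hat j_{i^t}$. Contrapositively, an invalid constraint is fed only when a witness exists, and because demotions only push buyers \emph{later} in $P$ while $i^t$ is itself not demoted during its current episode, that witness persists until $i^t$ is finally demoted.

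With this in hand the demotion bound follows: the true order $\succ_{i^t}$ is consistent with every valid constraint, so infeasibility can only appear after some invalid constraint has been inserted, and by the persistence property a witness is therefore present at each Type~2 step; Lemma~\ref{lem:perm-use} then caps the number of demotions (hence Type~2 mistakes and resets) at $n^2$. For the counting of Type~1 mistakes I would argue per episode: between two resets of $\hat\succ_i$ the approximate halving learner of Section~\ref{sec:discussion} absorbs one pairwise constraint per Type~1 mistake, each halving its version space, so it makes at most $O(m\log m)$ such mistakes before the version space is a singleton. The one delicate point is ruling out further Type~1 mistakes once the version space collapses: here I again use that $i$ is the first mistake to conclude that $j_i$ must have been available in our prediction (otherwise the $P$-predecessor who took it would be an earlier mistake), so the new constraint must contradict the surviving singleton order and hence render $\hat\succ_i$ infeasible — any post-collapse mistake is necessarily Type~2. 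Thus each episode contributes $O(m\log m)$ Type~1 mistakes, the number of episodes is at most the number of demotions plus $n$, i.e.\ $O(n^2)$, and summing yields $O(n^2 m\log m)$.

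The hard part of the whole proof is exactly the persistence-of-witness step, which ties invalid constraints (which the algorithm may accept while they are still feasible) to a demotion that nonetheless remains valid for Lemma~\ref{lem:perm-use}; the availability-containment argument and the per-episode halving count are then essentially bookkeeping. Finally, for the matching $\Omega(n^2 m\log m)$ lower bound I would construct an adversary that forces each of the $n$ buyers to be demoted $\Omega(n)$ times and, after every reset of that buyer's preference learner, replays carefully chosen item subsets so as to extract $\Omega(m\log m)$ fresh ordering mistakes before triggering the next demotion, so that the $\Omega(n^2)$ demotions and the $\Omega(m\log m)$ relearning cost multiply.
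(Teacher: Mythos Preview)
Your upper-bound argument follows the same skeleton as the paper's: charge every mistake either to progress of some per-buyer halving learner (at most $O(m\log m)$ per episode) or to a demotion in $P$ (at most $n^2$ total via Lemma~\ref{lem:perm-use}), and multiply. The paper's own proof is a three-sentence sketch that simply asserts ``once $\hat\succ_{i^t}$ becomes infeasible, all constraints added were valid, so demoting $i^t$ is valid''; you correctly recognize that this step is the crux and supply the missing justification---your persistence-of-witness argument (an invalid constraint is only ever fed when a witness $r$ exists, and since during $i^t$'s episode $i^t$ is not demoted while other buyers only move later, that witness survives to the moment of infeasibility) is exactly what is needed to invoke Lemma~\ref{lem:perm-use}. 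Your post-collapse argument (that once the version space is a singleton any further mistake on $i$ must be Type~2, because $j_{i}$ is necessarily available in the prediction whenever $i$ is the first mistake) is also a detail the paper leaves implicit. So for the upper bound you are doing the same thing, just more carefully.

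On the lower bound: the theorem claims $\Theta(n^2 m\log m)$, but the paper's proof addresses only the upper bound. You go further by sketching an adversary that interleaves $\Omega(n)$ demotions per buyer with $\Omega(m\log m)$ relearning per reset. The sketch is plausible but not yet a proof: you would need to exhibit concretely how the adversary controls item availability (e.g.\ via dummy buyers) so as to extract the full $\Omega(m\log m)$ sorting cost on a freshly reset learner while still being able to trigger the next demotion, and to check that the true order and true item preferences remain globally consistent across all these phases. Note also that buyer $i$ can be demoted at most $\loc{i}{\succ}-1$ times, so ``each buyer demoted $\Omega(n)$ times'' should be read as $\Theta(n^2)$ demotions in aggregate rather than per buyer.
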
 

\begin{proof}
  We describe how to efficiently implement an approximate halving
  algorithm for learning permutations consistent with a partial order
  in Section~\ref{sec:discussion}. If $i^t$ is the first mistake,
  either the estimate of $i^t$'s preferences are incorrect (in which
  case $j_{i^t} \succ_{i^t} \hat j_{i^t}$, and we add this
  constraint), or $i^t$ needs to be demoted. Once $\hat \succ_{i^t}$
  becomes infeasible, all constraints added were valid, so demoting
  $i^t$ is valid.
\end{proof}

\subsection{Multiple copies}

Several of these results are easy to extend to the setting where there
are multiple copies of each resource, and players are unit-demand for
multiple copies of a particular item (e.g., no player wants more than
one copy of a given item). If buyers are additive (across bundles of
different items), it suffices to learn their preferences over
\emph{types} of items. This can be done as in Corollary~\ref{cor:add}
with no loss, treating any copy of a resource identically internal to
the learning algorithm. For prediction, a buyer will take his favorite
\emph{bundle} of items, and that bundle can contain an item for which
there is at least one copy remaining. This implies a mistake bound for
learning the allocations which is \emph{independent} of the number of
copies of each item.

For unit-demand buyers, it is not clear how to use the solution from
Theorem~\ref{thm:unitreduction}, which reduces to the single-minded
case (for buyer $i$ to not have item $e$ available, there would need
to be $k_e$ ``ghosts'' that bought item $e$ prior to buyer $i$, rather
than a single conflict). On the other hand,
Algorithm~\ref{alg:unitprime} can be used directly, to learn the
permutation over buyers and, for each buyer $i$, $i$'s preference
order over item types.

In the case of single-minded buyers, recall that the problem is quite
easy if we ever see an allocation; Corollary~\ref{cor:single-id}
applies directly with no loss in the mistake bound.  On the other
hand, if no allocation is seen, and instead we only see the subset of
people who received their set, one can use a \emph{conflict
  hypergraph} rather than a conflict graph. The total number of edges
in a necessary hypergraph blows up rather quickly, unfortunately: the
size of this representation (and thus the number of mistakes) will be
$\Theta(n^k)$ where $k = \max_{j\in m}k_j$. Based on our previous
observation about the complexity of finding a minimal representation
(in terms of items) consistent with the perceived conflicts, even when
there is only one copy of each item, we suspect this problem may be
inherent. We leave the question of whether one can predict the winning
sets of single-minded buyers with a mistake bound and running time
which is polynomial in $m,n$ and $k$ (or even independent of $k$, for
the mistake bound).

\section{Variable prices: additive and unit-demand}\label{sec:prices}

The previous sections can be thought of simulating a simple mechanism:
according to the mechanism, buyers have priorities, and in order of
that priority, buyers will pick their favorite bundle available. For
single-minded buyers, the priorities could be thought of as a sorting
of buyers by their bid, or some other pecking order. A buyer's
preferences could be thought of as an ordering according to value, or
quasilinear utility according to some fixed prices. We now consider a
twist on this original setup: what if, rather than the prices being
fixed, each round was fed a price vector $p^t$ along with the subset
$S^t$, with the assumption that buyers would now take a bundle to
maximize their quasilinear utility with respect to these varying
prices? Assume, for simplicity, that $p^t \in \{0,1,\ldots, V\}^m$.
Simply running our previous mistake bound algorithm for unit-demand,
additive, or single-minded buyers is tantalizingly simple. Since
buyer's preferences over bundles will change from one round to the
next, this approach fails miserably.

The algorithm which solves the fixed price problem for unit-demand
buyers can be thought of in a slightly different way, which will be
useful for solving the problems with variable prices. An equivalent
solution to the problem is to start with a permutation datastructure
$P$ to learn the ordering over buyers, and for each buyer $i$, a
permutation datastructure $P_i$ to learn their preference ordering
over items. Whenever a mistake is made, the algorithm blames the
subroutine which is learning some buyer's preferences (namely, the
earliest buyer for which we made a mistake). If this causes their
subroutine to become infeasible, it must be the case that the buyer
needs to be demoted in the larger ordering, so we demote the buyer and
restart their subroutine. Then, the total mistake bound for the
algorithm will be $n^2 \MB$, where $\MB$ is the mistake bound for the
subroutines (because each buyer can be demoted at most $n$ times, and
there are at most $\MB$ mistakes for a buyer at each position).  This
intuition (running a global algorithm with subroutines for each
buyer's preferences) is our starting point for constructing
mistake-bound learning algorithms for variable prices. We begin by
designing an algorithm for additive buyers, which gives some intuition
for the case of unit-demand. We assume, for simplicity, there are no
ties for a buyer's most-preferred bundle at any set of prices.  All
results can be extended to allow ties assuming buyers break ties
consistently.

\subsection{Additive}\label{sec:add-prices}

Suppose in each round $t$, our input is a subset of buyers present
$S^t$ and a price vector $p^t$. The algorithm $\mathcal{A}$ we are
trying to simulate is composed of two parts, $\s $, a priority over
buyers, and $v_i(j)$ for each buyer $i\in B$, and $ j\in I$,
corresponding to the value buyer $i$ has for item $j$. On a given
subset and price vector pair, $\mathcal{A}$ will offer all items to
buyer $i\in S^t$ who is first in $\s $, who will take all items such
that $v_i(j) > p^t(j)$. Then, the remaining items are offered to the
remaining buyers $i'\in S^t$, in order of $\s$, who will do the same.

Our algorithm will predict an allocation of item $\widehat X^t_1,
\ldots, \widehat X^t_n$. If the allocation is wrong, the correct
allocation $X_1^t, \ldots, X_n^t$ is shown to our algorithm. We wish,
for arbitrary price vectors and subsets, to minimize the total number
of days on which we make prediction errors.

The argument of correctness for Algorithm~\ref{alg:additivevariable} (below)
is somewhat more complex than in the previous sections. As before, we
need to show that the algorithm never tells $P$ to push down a buyer
when it should not. The condition for this is slightly more
complicated, however. That is, the infeasibility of the binary search
for $v_i(j)$ is proof that this buyer cannot be above all of the
buyers below him according to $P$. It is important that the first
error according to $\perm{P}$ (and \emph{only} this error) is the one
used to update the model: this avoids an earlier error in the ordering
$P_t$ (or an error in an earlier binary search $v_i(j)$) placing
incorrect constraints on lower buyers.  To this end, let
$\firstmistake{X}{X'}{P}$ denote the first $i$, according to $P$, for
which $X_i \neq X'_i$. Let $V$ be the maximum valuation any buyer
has for any item.

\begin{algorithm}[t]
  \SetAlgoRefName{AdditiveVariable}
  $P = \init$\;
  Let $\bar v_i(j) = V$; $\underline v_i(j) = 0$\;
  \For {$t = 1$ to $T$} {
   Receive $S^t, p^t$\;
   Let $\widehat X_i^t = \emptyset$; $I' = [m]$\;
   Let $\widehat v^t_i(j)= \frac{\bar v_i(j) + \underline v_i(j)}{2}$\;
   \For{$b= 1$ to $ |S^t|$} {
     Let $i = \ith{b}{S^t}{P}$;  *Consider the $b$th-ranked buyer of $S^t$ according to $P$*\\
     Let $X_{i}^t = \{j \in M | \widehat v_i(j) > p^t(j)\}$ and  $I' = I' \setminus X_{i}^t$\;
   }
   Predict $\widehat X_1^t, \ldots, \widehat X_n^t$\;
   Learn $X_1^t, \ldots, X_n^t$\;
   \If{$X^t \neq \widehat X^t$} {
    Let $i^t = \firstmistake{X^t}{\widehat X^t}{P}$;   *index of first mistake according to $P$* \\
    \eIf{$\exists j\in X^t_{i^t}\Delta \widehat X^t_{i^t}$ such that $\overline v_{i^t}(j) = \underline v_{i^t}(j)$  *$\exists$ item for which binary search is invalid* } {
      $\demote{i^t}{P}$, Let $\underline v_{i^t}(j) = 0$ and $\bar v_{i^t}(j) = V$\;
      }
      {* Update the binary searches. *\\
        \For{each $j\in  \widehat X^t_{i^t}\setminus  X^t_{i^t}$} {
          Set $\overline v_{i^t}(j) = \widehat v_{i^t}(j)$\;
        }
        \For{each $j \in  X^t_{i^t}\setminus \widehat  X^t_{i^t}$} {
          Set $\bar v_{i^t}(j) = \widehat v_{i^t}(j)$\;
        }
      }
    }
  }
\caption{MB algorithm predicting $X_1^t, \ldots X_n^t$; additive
  buyers under order $\s$ \label{alg:additivevariable}}
\end{algorithm}

\begin{thm}\label{thm:additivevariable}
  Algorithm~\ref{alg:additivevariable} is an $O(\log(V)m n^2)$
  mistake-bound learning algorithm for the problem of predicting
  subsets of quasilinear additive buyer's purchases according to $\s$
  priority with prices $p^t$.
\end{thm}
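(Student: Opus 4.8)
The plan is to mirror the structure of the fixed-price additive argument (Corollary~\ref{cor:add}), but to replace the single bit ``buyer $i$ wants item $j$'' with a binary search that localizes the value $v_{i^t}(j)$ to within a factor of two of the truth. The two invariants I would maintain are: (i) for every buyer $i$ and item $j$, the true value lies inside the current search bracket, i.e. $\underline v_i(j) \le v_i(j) \le \bar v_i(j)$; and (ii) the permutation datastructure $P$ is never told to demote a buyer who does not deserve demotion, so that Lemma~\ref{lem:perm-use} applies and $P$ is pushed down at most $n^2$ times. The key to both is to only ever update using $i^t = \firstmistake{X^t}{\widehat X^t}{P}$: because every earlier buyer was predicted correctly, the remaining-item set $I'$ that the algorithm used when it reached $i^t$ matches the true remaining set, so the mistake on $i^t$ is attributable purely to $i^t$'s own (mispredicted) values, not to a cascade of earlier errors.

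First I would verify invariant (i) is preserved by the binary-search updates. When $j \in \widehat X^t_{i^t}\setminus X^t_{i^t}$, the algorithm predicted $i^t$ takes $j$ but she did not, so (given $j$ was available, as argued above) we learn $v_{i^t}(j) \le p^t(j) < \widehat v_{i^t}(j)$, justifying lowering the upper bracket endpoint to $\bar v_{i^t}(j) = \widehat v_{i^t}(j)$. Symmetrically, when $j \in X^t_{i^t}\setminus \widehat X^t_{i^t}$ we learn $v_{i^t}(j) > p^t(j) \ge \widehat v_{i^t}(j)$, justifying raising the lower endpoint. In both cases the new bracket still contains $v_{i^t}(j)$ and has halved in width, so each item can participate in such an update at most $O(\log V)$ times before the bracket collapses.

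Next I would handle invariant (ii), which is the analog of the second invariant in Theorem~\ref{thm:single} and is the main obstacle. The subtlety is the meaning of an infeasible binary search: the algorithm demotes $i^t$ exactly when some symmetric-difference item $j$ has $\bar v_{i^t}(j) = \underline v_{i^t}(j)$, i.e. the search has pinned $v_{i^t}(j)$ exactly, yet the prediction on $j$ is still wrong. I must argue this can only happen because $i^t$ is too high in $P$ relative to $\s$. Since $i^t$ is the first mistake and $v_{i^t}(j)$ is now known exactly, the prediction of whether $i^t$ takes $j$ would be correct if $j$'s availability when $i^t$ is reached matched the truth; availability of earlier buyers agrees by the first-mistake property, so the only way the prediction can still fail is that under $\s$ some buyer $i'$ with $\loc{i'}{\s} < \loc{i'}{P}$ — i.e. earlier in truth but placed after $i^t$ in $P$ — took (or failed to take) $j$ ahead of $i^t$. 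That furnishes exactly the witness required by Lemma~\ref{lem:perm-use}, so the demotion is valid; resetting $i^t$'s brackets to $[0,V]$ is safe because her old brackets may have been poisoned by constraints collected while she was mis-ordered.

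Finally I would assemble the mistake count. By Lemma~\ref{lem:perm-use} there are at most $n^2$ demotions. Between demotions, each of the $m$ items for buyer $i^t$ can trigger at most $O(\log V)$ bracket-halving mistakes, and each buyer occupies at most $n$ distinct positions over the course of the run, so the non-demotion mistakes total $O(n \cdot m \log V)$ per buyer and $O(n^2 m \log V)$ overall. Adding the $n^2$ demotion mistakes gives the claimed bound $O(\log(V)\, m\, n^2)$.
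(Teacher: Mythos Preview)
Your overall architecture is right, but invariant (i) as stated is false, and the argument you give for it rests on a claim that does not hold. You assert that because $i^t$ is the first mistake according to $P$, ``the remaining-item set $I'$ that the algorithm used when it reached $i^t$ matches the true remaining set.'' The first-mistake property tells you that every buyer $i$ with $\loc{i}{P} < \loc{i^t}{P}$ has $\widehat X^t_i = X^t_i$; it does \emph{not} tell you that the set of buyers preceding $i^t$ in $P$ coincides with the set preceding $i^t$ in $\s$. If some $i'$ satisfies $\loc{i'}{P} > \loc{i^t}{P}$ but $\loc{i'}{\s} < \loc{i^t}{\s}$, then in reality $i'$ may take an item $j$ before $i^t$ is reached, yielding $j \in \widehat X^t_{i^t} \setminus X^t_{i^t}$ even though $v_{i^t}(j) > p^t(j)$. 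Your update then sets $\bar v_{i^t}(j) = \widehat v_{i^t}(j)$, which can push the bracket strictly below the true value. This is exactly case~3 of Lemma~\ref{lem:pos}, and it is precisely the situation in which the binary search is fed a \emph{wrong} constraint. You implicitly concede this later when you say the brackets ``may have been poisoned by constraints collected while she was mis-ordered'' --- that sentence contradicts your own invariant (i).

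The fix, which is what the paper does, is to weaken invariant (i) to a \emph{conditional} statement: every constraint handed to $i^t$'s binary search is valid \emph{provided} $i^t$ does not need to be demoted from her current position in $P$ (equivalently, the bracket contains $v_{i^t}(j)$ under that hypothesis). Then infeasibility of the binary search is not a statement about $v_{i^t}(j)$ directly; it is a proof by contradiction that the hypothesis fails, i.e.\ that some $i'$ later in $P$ is earlier in $\s$, which is exactly the witness Lemma~\ref{lem:perm-use} requires. Your paragraph on invariant (ii) is close to this reasoning, but it cannot stand on the premise ``$v_{i^t}(j)$ is now known exactly,'' since that premise depends on the faulty invariant (i). Once you restate (i) conditionally, the rest of your argument and the mistake count go through as you wrote them.
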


We prove the following lemma as a starting point for
Theorem~\ref{thm:additivevariable}.

\begin{lemma}\label{lem:pos}
  Let buyer $i^t\in B$ be the first mistake in round $t$. Then, there
  exists some item $j\in I$ for which one of these statements holds:
\begin{enumerate}
\item  $j\notin \widehat X_{i^t}^t$, but $j\in
  X_{i^t}^t$, and $\widehat v_{i^t}(j) < v_{i^t}(j)$
\item $j\in \widehat X^{i^t}_t$, but $j\notin
  X_{i^t}^t$, and $\widehat v_{i^t}(j) > v_{i^t}(j)$
\item $j\in \widehat X_{i^t}^t$, but $j\notin X_{i^t}^t$, and there is
  some $i'$ such that $\loc{P^t}{i^t} < \loc{P^t}{i'}$ but
  $\loc{\s}{i^t} > \loc{\s}{i'}$.
\end{enumerate}
\end{lemma}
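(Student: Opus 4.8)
The plan is to fix the witnessing item by a case analysis driven by a single structural observation: because $i^t = \firstmistake{X^t}{\widehat X^t}{P}$ is the \emph{first} buyer in $P$-order on whom we err, every buyer $i'$ preceding $i^t$ in $P$ satisfies $\widehat X^t_{i'} = X^t_{i'}$. Hence the set of items consumed by the buyers strictly before $i^t$ in $P$ is identical in our predicted run and in the true run, and (since each item is allocated to at most one buyer in either run) the pool of items still available when $i^t$ is served coincides in the two runs. First I would record this fact, as it is what lets me pass freely between ``available in the prediction'' and ``available in reality'' at $i^t$'s turn.

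With this in hand, I would pick an arbitrary item $j \in X^t_{i^t} \Delta \widehat X^t_{i^t}$ (nonempty since $i^t$ is a mistake) and split on which side of the symmetric difference it lies. In the case $j \in X^t_{i^t}\setminus \widehat X^t_{i^t}$ (buyer takes $j$ truly but we did not predict it), note that $j$ is available to $i^t$ in the true run, hence also available in the prediction by the observation above; thus the only reason we failed to predict $j$ is $\widehat v_{i^t}(j) \le p^t(j)$, while $i^t$ truly taking $j$ forces $v_{i^t}(j) > p^t(j)$. Chaining these gives $\widehat v_{i^t}(j) < v_{i^t}(j)$, which is exactly conclusion~(1).

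In the case $j \in \widehat X^t_{i^t}\setminus X^t_{i^t}$ (predicted but not taken), $j$ is available to $i^t$ in the prediction and $\widehat v_{i^t}(j) > p^t(j)$. Since the mechanism gives $i^t$ every available item whose value exceeds its price, $i^t$'s failure to take $j$ truly has exactly two possible causes. Either $v_{i^t}(j) \le p^t(j)$, yielding $\widehat v_{i^t}(j) > v_{i^t}(j)$ and conclusion~(2); or $j$ was already taken in the true run by some $i' \in S^t$ that is earlier than $i^t$ in $\s$. In the latter situation I would argue $i'$ cannot precede $i^t$ in $P$: if it did, then by first-mistake correctness $i'$ would also take $j$ in the prediction, contradicting that $j$ was available to $i^t$ there. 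Hence $i^t$ precedes $i'$ in $P$ while $i'$ precedes $i^t$ in $\s$, which is conclusion~(3).

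The only real content --- and the step I expect to be the crux --- is the opening observation that the available pools agree at $i^t$'s turn; everything after it is bookkeeping over whether a discrepancy comes from a mis-estimated value or from a priority inversion. Its proof leans essentially on two facts: that $i^t$ is the \emph{first} $P$-mistake (so earlier allocations match exactly, not merely in aggregate), and that an item is consumed by at most one buyer per run (so an item unavailable in one run must be pinned to a specific, matching earlier buyer in the other). I would also note that the two subcase splits are exhaustive precisely because an additive buyer deterministically takes \emph{all} available items priced below her value, leaving no third reason for a per-item discrepancy.
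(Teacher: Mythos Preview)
Your case analysis mirrors the paper's proof and the conclusions you draw in each branch are correct, but your opening ``observation'' is overclaimed and, as stated, false. From the first-mistake property you correctly deduce that every $i'$ preceding $i^t$ in $P$ has $\widehat X^t_{i'}=X^t_{i'}$, hence the union of items consumed by those particular buyers is the same in the two runs. But this is \emph{not} the same as saying the pool available to $i^t$ coincides: in the true run the pool at $i^t$'s turn is determined by the buyers preceding $i^t$ in $\s$, not in $P$, and these two sets of predecessors can differ. Concretely, take $\s=(a,b,c)$, $P=(c,b,a)$, two items $x,y$ at price $0$, true values $v_a=(0,1)$, $v_b=(0,0)$, $v_c=(1,0)$ and estimates equal to the true values except $\widehat v_b(y)=1$. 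Then $c$ is correct, $b$ is the first mistake, the predicted pool at $b$'s turn is $\{y\}$, and the true pool is $\{x\}$; the pools do not coincide.

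Your case~1 argument invokes exactly the false direction (``true-available $\Rightarrow$ predicted-available''). The conclusion you want there is still right, but for a different reason: since $j\in X^t_{i^t}$, no other buyer has $j$ in $X^t$; in particular no $i'$ preceding $i^t$ in $P$ has $j\in X^t_{i'}=\widehat X^t_{i'}$, so $j$ is predicted-available to $i^t$. This is precisely the (terse) argument the paper gives, and it is what your parenthetical about ``each item is allocated to at most one buyer'' is really doing. Your handling of the $j\in\widehat X^t_{i^t}\setminus X^t_{i^t}$ branch (splitting on whether $j$ is truly available, and if not, showing the taker must be later in $P$) is correct and matches the paper. So: drop the pools-coincide claim, replace the case~1 justification with the direct uniqueness argument, and the proof goes through exactly as in the paper.
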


\begin{proof}
  A mistake implies either there was some item $j\notin \widehat
  X_{i^t}^t$ but $j\in X_{i^t}^t$, or $j\in \widehat X_{i^t}^t$ but
  $j\notin X_{i^t}^t$.

  Consider the first case.  Our algorithm did not give item $j$ to buyer $i^t$:
  moreover, it didn't give item $j$ to some buyer $i'$ with higher priority
  ($\loc{P^t}{i'} < \loc{P^t}{i^t}$), since buyer $i^t$ was the \emph{first}
  mistake, implying our estimate of buyer $i^t$'s value for item $j$ was too low,
  which falls into case $1$.

  If, on the other hand, our algorithm allocated item $j$ to buyer $i^t$, but
  buyer $i^t$ was not awarded item $j$, then either buyer $i^t$'s value for item $j$ is less
  than the price (and our estimate $\widehat v_{i^t}(j)$ was too high,
  implying case $2$), or some buyer $i'$ with higher priority w.r.t $\s$ took item $j$
  before buyer $i^t$. In this case, since buyer $i^t$ was the first mistake, this
  implies $\loc{P^t}{i^t} < \loc{P^t}{i'}$ but $\loc{\s}{i^t} >
  \loc{\s}{i'}$, implying case $3$.
\end{proof}

Now, we prove Theorem~\ref{thm:additivevariable}.

\begin{proof}
  Along the lines of our previous proofs, we will show our algorithm
  maintains two invariants:
\begin{enumerate}

\item For any buyer $i$, for each item $j$, at each time $t$, the binary
  search for $v_i(j)$ has been given only accurate upper and lower
  bounds for any permutation $\s'$ such that $i$ has not been demoted
  in $\s'$ from $\widehat\s^t$ (but some other buyers may be
  demoted). That is, if buyer $i$'s precedence does not decrease, any
  $v_i(j)$s which would be consistent with $\s'$ and the observations
  are consistent with the binary searches and $\widehat\s^t$.
\item Any buyer $i^t$ demoted in $P_t$ has some buyer $i'$ such that
  $\loc{P^t}{i^t} < \loc{P^t}{i'}$ but $\loc{\s}{i'} < \loc{\s}{i^t}$.
\end{enumerate}

We begin with the first invariant. It is satisfied prior to any
constraints being added to any buyer's binary searches. Now, suppose
it is true until time $t$: all constraints are accurate w.r.t
$\widehat\s^{t}$ and any $\s'$ such that $i$ has not been demoted from
$\widehat{\s^t}$ to $\s'$ but other buyers may have been demoted. If, at
time $t$, $i$ gets another constraint added to her binary searches,
this implies either this constraint is correct or buyer $i$ needs to be
demoted, by Lemma~\ref{lem:pos}. Thus, if buyer $i$ is not demoted (as is
the case for $\s'$), this new constraint (and so all the constraints)
in her binary searches are valid.

Now, we prove the second invariant.  The invariant is true at the
beginning of the algorithm prior to any buyer being pushed
downwards. Now, consider some time $t$, and assume this invariant
holds until time $t$. The only case of interest is when one of buyer
$i^t$'s binary searches is infeasible and she is demoted.

Due to invariant 1, we know that all the constraints buyer $i^t$ has
received until time $t$ are valid at her position in $\widehat \s^t$ (or
any earlier position), since buyer $i^t$'s binary searches are reset
whenever buyer $i^t$ is pushed down. Since she is the first mistake, by
Lemma~\ref{lem:pos}, it is either the case that the current constraint
being added is true with respect to her $v_{i^t}(j)$s and her
position (or an earlier position), or she must occur later in the
ordering. Thus, all the constraints in her binary searches are correct
with respect to her current position (or any earlier one) in the
ordering. Since there are some $v_{i^t}(j)$s which are consistent with
the observations, but not the set of constraints, it must be the case
that buyer $i^t$ occurs later in the ordering.

Now, we show the mistake bound. If a mistake is made, some buyer $i^t$
either updates her binary searches or is demoted. At most $\log(V)$
binary search updates can occur for a given item and buyer before the
binary search becomes infeasible and she is demoted. Thus, there can
be at most $m \log(V)$ mistakes resulting in binary search updates for
a buyer before she is demoted. By Lemma~\ref{lem:perm-use} and
invariant 2, no buyer is pushed later in the ordering than she occurs
in $\s$; thus, there are at most $n^2$ mistakes resulting in
demotions. Thus, in total, there are at most $n^2 m \log(V)$ many
mistakes.\end{proof}

\subsection{Unit-Demand}\label{sec:unit-prices}

The case of unit-demand buyers is similar to that of additive buyers,
though the buyers will no longer have \emph{separable} preferences
over items: instead, out of a set of available items $T$ at prices
$p^t$, buyer $i$ will buy $j = \argmax_{j\in J} [v_i(j) - p^t(j)]$ to
maximize his quasilinear utility (assume that there is some consistent
tie-breaking in the event that several items are equally good). So,
rather than using binary search for each item separately, for each
buyer, we will run a mistake bound ellipsoid algorithm; whenever a
constraint is added, it will be of the form $v_i(j) - p^t(j) > v_i(j')
- p^t(j')$, where the $v_i(j)$s are variables and the $p^t(j)$s are
constants coming from the online price vectors.

\begin{algorithm}[t]
\SetAlgoRefName{UnitVariable}
  $P = \init$\;
  Let $\widehat R_i$ be an instance of an ellipsoid algorithm w. unknowns $v_i(j)$ for all $i \in B, j\in I$\;
  \For {$t = 1$ to $T$} {
   Receive $S^t, p^t$\;
   Let $\widehat X_i^t = \emptyset$ an $I' = [m]$\;
   \For {$b = 1$ to $|S^t|$} {
     Let $i = \ith{b}{S^t}{P}$\;
     Let $\widehat v_i\in \mathbb{R}^m$ be the center estimated by $\widehat R_i$\;
     Let $\widehat j_i = \argmax_{j\in I'} \widehat v_i(j) - p^t(j)$;  // prediction for $i$ w.r.t. prices, est. values, remaining items \\
     Let $X_{i}^t = \{\widehat j_i\}$ if $\widehat v_i(\widehat j_i) - p^t(\widehat j_i) > 0$ or $\emptyset$ otherwise\;
     Let $I' = I' \setminus X_i^t$\;
   }
   Predict $\widehat X_1^t, \ldots, \widehat X_n^t$\;
   Learn $X_1^t, \ldots, X_n^t$\;
   \If{$X \neq \widehat X$} {
     Let $i^t = \firstmistake{X^t}{\widehat X^t}{P}$\;
     \eIf{$X_{i^t}^t = \emptyset$} {
       Give the constraint $v_{i^t}(\widehat j_{i^t}) < p^t(j_{i^t})$ to $\widehat R_{i^t}$\; }
     {  Let $\{j_{i^t}\} = X_{i^t}^t$; // The item $i^t$ actually won\\
        \eIf{$\widehat X_{i^t}^t = \emptyset$} {
          Give the constraint $v_{i^t}( j_{i^t}) > p^t(j_{i^t})$ to $\widehat R_{i^t}$\; }
        { Let $\widehat X_{i^t}^t = \{\widehat j_{i^t}\}$\;
          Give the constraint $v_{i^t}( j_{i^t}) - v_{i^t}(\widehat j_{i^t}) > p^t(j_{i^t}) - p^t(\widehat j_{i^t})$ to $\widehat R_{i^t}$\;
        }
        \If{$\widehat R_{i^t}$ is infeasible} {
          $\demote{i^t}{P}$ and restart $\widehat R_{i^t}$\;
        }
      }
   }
}
\caption{MB algorithm predicting $X_1^t, \ldots X_n^t$; unit-demand
  buyers, order $\s$}\label{alg:unitvariable}
\end{algorithm}

\begin{thm}\label{thm:unitvariable}
  Algorithm~\ref{alg:unitvariable} is an $O(n^2\MB )$-mistake bound learner
  for unit-demand buyers with respect to some order $\s$, where $\MB$
  is the online mistake bound guarantee of the online classification
  algorithm.
\end{thm}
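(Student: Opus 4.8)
The plan is to follow the template of the additive variable-price proof (Theorem~\ref{thm:additivevariable}) almost verbatim, replacing the per-item binary searches with the per-buyer ellipsoid subroutine $\widehat R_i$ and replacing Lemma~\ref{lem:pos} with its unit-demand analogue. I would maintain the same two invariants: (1) every constraint handed to $\widehat R_i$ since its last restart is satisfied by buyer $i$'s true value vector $v_i$ as long as $i$ has not been demoted below her current position in $\sh^t$; and (2) whenever Algorithm~\ref{alg:unitvariable} demotes a buyer $i^t$, there is a witness $i'$ with $\loc{P^t}{i^t} < \loc{P^t}{i'}$ but $\loc{\s}{i'} < \loc{\s}{i^t}$, so that Lemma~\ref{lem:perm-use} applies and no buyer is ever pushed past her true position.

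The first step, and the technical heart, is the unit-demand analogue of Lemma~\ref{lem:pos}: if $i^t$ is the first mistake according to $P^t$, then the single constraint the algorithm feeds to $\widehat R_{i^t}$ is either valid for $v_{i^t}$ or comes with such a demotion witness. Here I would use that, since $i^t$ is the first $P^t$-mistake, every buyer preceding $i^t$ in $P^t$ received exactly her true bundle, so the set of items the algorithm has consumed before reaching $i^t$ equals the true consumption of those buyers; consequently the item $j_{i^t}$ that $i^t$ truly wins is still available in the prediction, and the predicted item $\widehat j_{i^t}$ is available in the prediction by construction. I would then split on the three branches of Algorithm~\ref{alg:unitvariable}. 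In the branch $X^t_{i^t}=\emptyset$ (constraint $v_{i^t}(\widehat j_{i^t}) < p^t(\widehat j_{i^t})$), the branch $\widehat X^t_{i^t}=\emptyset$ (constraint $v_{i^t}(j_{i^t}) > p^t(j_{i^t})$), and the branch where both are singletons (the comparative constraint $v_{i^t}(j_{i^t}) - p^t(j_{i^t}) > v_{i^t}(\widehat j_{i^t}) - p^t(\widehat j_{i^t})$), the reasoning is the same: the constraint is exactly the utility inequality witnessed by $i^t$'s true choice, and it is valid provided the relevant predicted item $\widehat j_{i^t}$ was genuinely available to $i^t$ in the true run. The crux is the remaining possibility, in which $\widehat j_{i^t}$ was truly taken by a buyer $i'$ who is $\s$-earlier than $i^t$; since $\widehat j_{i^t}$ was still available in our prediction when we reached $i^t$, that $i'$ must be $P^t$-later than $i^t$, which is precisely the demotion witness.

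With that lemma in hand, the two invariants go through exactly as in Theorem~\ref{thm:additivevariable}. Invariant (1) holds by induction, because a fed constraint can fail only in the witness case, which cannot arise while $i$ sits at or above her true position; and invariant (2) holds because infeasibility of $\widehat R_{i^t}$ means $v_{i^t}$ violates some constraint since the last restart, so by invariant (1) $i^t$ must lie strictly below her current position, producing the witness Lemma~\ref{lem:perm-use} requires. I would then assemble the mistake bound: by Lemma~\ref{lem:perm-use} there are at most $n^2$ demotions; between consecutive demotions a buyer's ellipsoid is fresh, so the online classification guarantee caps the constraint-adding mistakes at $\MB$ per buyer per position, and since each of the $n$ buyers occupies at most $n$ positions this contributes at most $n^2\MB$ mistakes, for a total of $O(n^2\MB)$.

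The main obstacle I anticipate is the case analysis in the analogue of Lemma~\ref{lem:pos}, and specifically being careful that ``first mistake according to $P^t$'' pins down the available-item set in the \emph{prediction} but not in the true run --- the true run orders $S^t$ by $\s$, not $P^t$ --- so that the only way a fed constraint can be invalid is through an $\s$-earlier, $P^t$-later buyer consuming the predicted item, which is exactly what yields the demotion witness. The comparative (rather than threshold) form of the unit-demand constraints also forces the ellipsoid subroutine to be a genuine mistake-bound halfspace learner with bound $\MB$ and a detectable infeasibility condition, which is the only place the specific learner enters the argument.
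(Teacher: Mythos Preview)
Your proposal is correct and follows essentially the same approach as the paper: maintain the two invariants from the additive case, establish the unit-demand analogue of Lemma~\ref{lem:pos} (stated in the paper as Lemma~\ref{lem:pos-unit}) via the same three-way case split on $X^t_{i^t}$ and $\widehat X^t_{i^t}$, and assemble the $O(n^2\MB)$ bound from Lemma~\ref{lem:perm-use} together with the per-instantiation guarantee of the ellipsoid subroutine. Your discussion of why $j_{i^t}$ is available in the prediction and why an invalid fed constraint forces a $\s$-earlier, $P^t$-later witness is in fact a bit more explicit than the paper's, but the argument is the same.
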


The main theorem of this section follows from a similar analysis to
that of additive buyers in the previous section, with a twist stemming
from the fact that we use the Ellipsoid algorithm as the mistake-bound 
subroutine (with each mistake serving as its separation
oracle), rather than binary search for each item separately. 
This is similar to the use of the Ellipsoid algorithm
by \citet{maass1990fast} for learning
a linear separator.  We start by
stating a lemma about the mistake bound for this subroutine.
%
%
%
\begin{lemma}\label{lem:unit-half}
  Using the ellipsoid algorithm to learn the collection $\{v_i(j)\}_j$ has
  a mistake bound of $O(m^2(K+\log m))$ so long as each mistake returns a 
  constraint such that its current hypothesis $\hat v_i$ is no longer
  feasible, where $K$ is the maximum precision of the $v_i$s.
\end{lemma}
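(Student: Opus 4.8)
The plan is to read Algorithm~\ref{alg:unitvariable}'s per-buyer subroutine $\widehat R_i$ as a textbook run of the ellipsoid method in which each mistake plays the role of a separation-oracle call. The $m$ unknowns are the values $v_i(1),\dots,v_i(m)$; the feasible set $\mathcal F$ is the polytope of value vectors consistent with every constraint handed to $\widehat R_i$ so far (each of the form $v_i(j)>p^t(j)$, $v_i(j)<p^t(j)$, or $v_i(j)-v_i(j')>p^t(j)-p^t(j')$), and the hypothesis $\widehat v_i$ is the current ellipsoid center. The hypothesis of the lemma --- that each returned constraint makes $\widehat v_i$ infeasible --- is precisely the statement that every update is a \emph{central cut} (the separating hyperplane passes through the center), which is what licenses the strong volume-reduction rate below.

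First I would fix the geometry. Enclose every plausible value vector in an initial ball $E_0$ of radius $R=O(\sqrt m\,2^{K})$, since each coordinate is bounded by the largest expressible value, whose magnitude is governed by the precision $K$; then $\mathrm{vol}(E_0)=\kappa_m R^m$, where $\kappa_m$ is the volume of the unit ball in $\mathbb R^m$. Next I would invoke the standard volume-reduction guarantee of the ellipsoid method: in $\mathbb R^m$ a central cut replaces the current ellipsoid by one whose volume is at most $e^{-1/(2(m+1))}$ times as large. Hence after $N$ mistakes $\mathrm{vol}(E_N)\le \kappa_m R^m\, e^{-N/(2(m+1))}$. Because a central cut never discards a point of $\mathcal F$, the running ellipsoid always contains $\mathcal F$, so $\mathrm{vol}(E_N)\ge \mathrm{vol}(\mathcal F)$ for as long as $\mathcal F\neq\emptyset$ (i.e.\ until $\widehat R_{i}$ declares infeasibility).

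The crux, and the step I expect to be the main obstacle, is a lower bound on $\mathrm{vol}(\mathcal F)$ whenever $\mathcal F$ is nonempty. I would argue that a nonempty $\mathcal F$ contains a ball of radius $r=2^{-O(K)}$: the true value vector lies on a grid of spacing $2^{-K}$, while the constraints have $\pm1$ coefficients (they compare two value variables) and integer right-hand sides (price differences), so every strict inequality that the true vector satisfies is satisfied with a margin of at least $2^{-\Theta(K)}$. Relaxing each constraint by that margin keeps the true vector feasible but makes the relaxed region full-dimensional, so it inscribes a ball of radius $r$, giving $\mathrm{vol}(\mathcal F)\ge \kappa_m r^m$. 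This is the usual perturbation / bit-complexity argument by which the ellipsoid method certifies infeasibility of a bounded-precision linear system; the delicate point is the strictness of the inequalities, which forces the explicit margin bookkeeping rather than a direct appeal to the target point alone.

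Finally I would combine the two volume estimates. From $\kappa_m r^m\le \kappa_m R^m\, e^{-N/(2(m+1))}$, taking logarithms gives $N\le 2(m+1)\,m\,\ln(R/r)$, and since $\ln(R/r)=O(K+\log m)$ this is $O\big(m^2(K+\log m)\big)$. Thus once the number of mistakes charged to buyer $i$ exceeds this threshold the feasibility test inside Algorithm~\ref{alg:unitvariable} must fire --- which is exactly the event the outer analysis (together with Lemma~\ref{lem:perm-use}) uses to justify a valid demotion --- establishing the claimed $O(m^2(K+\log m))$ mistake bound for the subroutine.
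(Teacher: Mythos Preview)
Your proposal is correct and follows essentially the same approach as the paper: treat each mistake as a central-cut separation-oracle call to the ellipsoid method over $m$ variables, bound the initial and terminal volumes by $2^{\pm O(m(K+\log m))}$, and divide by the per-step volume-shrinkage rate to get $O(m^2(K+\log m))$ iterations. If anything, you are more careful than the paper in justifying the terminal volume lower bound via the precision-margin argument; the paper simply asserts the $2^{-O(m(K+\log m))}$ bound and the $(1-1/m)$ shrinkage without further comment.
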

\begin{proof}
We will have $m$ variables corresponding to the valuations of buyer $i$ to each of the $m$ items.
The Ellipsoid algorithm maintains an ellipsoid that contains the feasible region (the possible $m$-tuples of valuations
consistent with observations so far) and proposes as its current hypothesis the center of that ellipsoid.
%
 We use this center as a proposed valuation for buyer $i$ until
 we make an error involving her. Once we make an error we identify a violated linear constraint, and we return it to the Ellipsoid algorithm, which then updates its ellipsoid and hypothesis.

 In each iteration (mistake of the algorithm) the volume shrinks multiplicatively by a fraction of  $1 - \frac{1}{m}$.
 The initial volume is at most $2^{O(m(K+\log m))}$. The final volume, assuming that there is a consistent valuation,
 is at least $2^{-O(m(K+\log m))}$. This implies that after at most $O(m^2(K+\log m))$ errors
 we reach a volume which is too small, and therefore we can declare there is no feasible valuation.
%
\end{proof}

We now state the analogue to Lemma~\ref{lem:pos} for the unit-demand
case.

\begin{lemma}\label{lem:pos-unit}
  Suppose~\ref{alg:unit} makes a mistake at time $t$. Let $i^t$ be the
  first mistake (according to $\widehat\s^t$). Let $\widehat j_{i^t}$
  be the item we predicted $i^t$ to win (if any) and $j_{i^t}$ the
  item $i^t$ won (if any). Then one of these holds:
\begin{enumerate}
\item $v_{i^t}(\widehat j_{i^t}) - p^t(\widehat j_{i^t}) < 0 < \widehat v_{i^t}(\widehat j_{i^t}) - p^t(\widehat j_{i^t})$, or $ v_{i^t}(\widehat j_{i^t}) < \widehat v_{i^t}(\widehat j_{i^t})$
\item $v_{i^t}(\widehat j_{i^t}) - p^t(\widehat j_{i^t}) < v_{i^t}(j_{i^t}) - p^t(j_{i^t})$
\item $v_{i^t}(j_{i^t}) > \widehat v_{i^t}(j_{i^t})$
\item $\widehat j_{i^t}$ was not available ($\exists i'$ s.t.
  $\loc{P^t}{i^t} < \loc{P^t}{i'}$ but  $\loc{\s}{i^t} >
  \loc{\s}{i'}$.
\end{enumerate}
\end{lemma}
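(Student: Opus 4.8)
The proof parallels that of Lemma~\ref{lem:pos}, with the complication that a unit-demand buyer takes at most one item and chooses it to maximize quasilinear utility, so the violated constraints are comparisons between two items rather than thresholds on a single item. The plan is to case-split on the form of the first mistake at $i^t$ and, in each case, either exhibit a true (and center-violating) constraint on $\{v_{i^t}(j)\}_j$ (cases 1--3) or exhibit a blocker witnessing that $i^t$ must be demoted (case 4). The three mistake types are: we predicted $\widehat X^t_{i^t}=\{\widehat j_{i^t}\}$ but $X^t_{i^t}=\emptyset$; we predicted $\emptyset$ but $X^t_{i^t}=\{j_{i^t}\}$; and both are nonempty with $\widehat j_{i^t}\neq j_{i^t}$.

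The workhorse is an availability claim driven by the minimality of $i^t$ as the first mistake according to $P$: every item $i^t$ wins in the true run is available to $i^t$ in our predicted run. Indeed, if $j_{i^t}\in X^t_{i^t}$, then no buyer earlier than $i^t$ in $\s$ took $j_{i^t}$, since items are unique. Were some $i'$ with $\loc{P^t}{i'}<\loc{P^t}{i^t}$ to take $j_{i^t}$ in our prediction, then since $i^t$ is the first mistake, $i'$'s prediction is correct and $i'$ would take $j_{i^t}$ in truth as well, contradicting that $i^t$ wins it; hence $j_{i^t}$ is still available when we process $i^t$. The symmetric observation is the source of case 4: if the predicted item $\widehat j_{i^t}$ is \emph{unavailable} in the true run, the buyer $i'$ who took it in truth is earlier than $i^t$ in $\s$, yet by the same first-mistake argument cannot be earlier in $P^t$ (else our prediction for $i'$ would have consumed $\widehat j_{i^t}$), giving $\loc{P^t}{i^t}<\loc{P^t}{i'}$ and $\loc{\s}{i'}<\loc{\s}{i^t}$.

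With these facts, I map each mistake type. When we predicted $\{\widehat j_{i^t}\}$ but $X^t_{i^t}=\emptyset$: either $\widehat j_{i^t}$ was available in truth, so $i^t$ declining it forces $v_{i^t}(\widehat j_{i^t})-p^t(\widehat j_{i^t})\le 0$ while we predicted it with $\widehat v_{i^t}(\widehat j_{i^t})-p^t(\widehat j_{i^t})>0$, which is case 1; or $\widehat j_{i^t}$ was unavailable, which is case 4. When we predicted $\emptyset$ but $X^t_{i^t}=\{j_{i^t}\}$: by availability $j_{i^t}$ was available to us, so predicting $\emptyset$ means $\widehat v_{i^t}(j_{i^t})-p^t(j_{i^t})\le 0$, while $i^t$ taking it means $v_{i^t}(j_{i^t})-p^t(j_{i^t})>0$, hence $v_{i^t}(j_{i^t})>\widehat v_{i^t}(j_{i^t})$, which is case 3. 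When both are nonempty with $\widehat j_{i^t}\neq j_{i^t}$: if $\widehat j_{i^t}$ is available in truth, then $i^t$ preferring $j_{i^t}$ gives $v_{i^t}(\widehat j_{i^t})-p^t(\widehat j_{i^t})<v_{i^t}(j_{i^t})-p^t(j_{i^t})$, which is case 2 (and this constraint cuts the current center, since our choosing $\widehat j_{i^t}$ over the available $j_{i^t}$ means $\widehat v_{i^t}$ satisfies the reverse inequality); otherwise $\widehat j_{i^t}$ is unavailable and we are again in case 4. The no-ties assumption is what promotes each utility comparison to a strict inequality.

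The main obstacle is the availability bookkeeping: because our prediction serves buyers in the order of $P$ while the true mechanism serves them in the order of $\s$, ``available to $i^t$'' differs between the two runs, and the entire case split hinges on using the minimality of $i^t$ to synchronize the two notions of availability for exactly the two relevant items $j_{i^t}$ and $\widehat j_{i^t}$. Once that synchronization is established, each case reduces to a single quasilinear-utility inequality, so I expect the bulk of the argument to be the claim in the second paragraph rather than the case checking itself.
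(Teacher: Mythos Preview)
Your proposal is correct and follows the same three-way case split (predicted something/got nothing, predicted nothing/got something, predicted wrong item) as the paper's proof, mapping each to the four conclusion cases in the same way. Your treatment is in fact more explicit than the paper's about the availability synchronization driven by minimality of $i^t$, which the paper compresses into a single clause (``since she is the first mistake it cannot be because we predicted that someone earlier took $j$''); otherwise the arguments coincide.
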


\begin{proof}
  Consider a mistake on buyer $i^t$. Either it is the case that (a)
  $i^t$ bought nothing and we predicted she bought something, (b) she
  bought something and we predicted nothing, or (c) we predicted the
  wrong item.

  (a) occurs only when $\widehat j$ was no longer available ($i^t$ needs
  to be demoted, case 4) or $\widehat j$ was too expensive, $v_{i^t}(\widehat
  j) - p^t(\widehat j) < 0$ (case 1).  (b) can only occur because our
  estimate of her value of an item was to small (case 3), since she is
  the first mistake it cannot be because we predicted that someone
  earlier took $j$. (c) occurs when either $\widehat j$ was not available
  (and $i$ needs a demotion, case 4) or our estimate of utility was
  wrong (case 2).\end{proof}

Now, we prove Theorem~\ref{thm:unitvariable}.

\begin{proof}
  We claim the same two invariants are true of
  Algorithm~\ref{alg:unitvariable} as were true of
  Algorithm~\ref{alg:additivevariable}, since Lemma~\ref{lem:pos-unit}
  provides the analogous guarantees (namely, that when we make a
  mistake, we either get to add a constraint to some ellipsoid
  algorithm, or we get to demote some buyer). Thus, the algorithm is
  correct.  Each instantiation of the ellipsoid algorithm makes at
  most $\MB$ mistakes before it is demoted and restarted, and there
  are at most $n^2$ demotions total.
 Thus, a mistake bound of $\MB n^2$ in
  total holds.\end{proof}

\section{Discussion}\label{sec:discussion}
In this paper we present algorithms that from observations of opaque
transactions (observing just who wins and who doesn't in the case of
single-minded buyers, or observing the allocations produced in the
case of additive or unit-demand buyers) can reconstruct  both the
preferences of the buyers and the mechanism used by the seller
sufficiently well to predict the outcomes of new transactions.  We
focus on priority-based {\em ordered arrival mechanisms} on the side
of the seller, and commonly-studied classes of valuation functions for
the buyers.  It would be interesting to consider this problem in the
context of other mechanisms and other observation models as well.
Note that for mechanisms such as VCG (producing a
social-welfare-maximizing allocation) certain complications arise: for
instance even in the case that all buyer valuations are known, finding
the allocation produced can be NP-complete if buyers are
single-minded.  So one would want to focus on settings where at least
when everything is known the prediction problem is easy.

A concrete open question is whether one can improve the mistake bounds
given in the previous sections with computationally efficient
algorithms. While the single-minded mistake bound has a matching
information-theoretic lower bound\footnote{Suppose there are
  $\frac{n(n-1)}{2}$ items (one for each pair of buyers). In each
  round $t$, the adversary presents the algorithm with $S^t$ which
  contains a pair of buyers that has never been presented before. The
  algorithm needs to predict whether one or both of the
  buyers will be satisfied (guessing whether both buyers are both
  interested in their ``shared'' item or not). Regardless of the
  algorithm's choice, the adversary will say that was a mistake: this
  yields a consistent set of conflicts and will force the algorithm to
  make $\Omega(n^2)$ mistakes.}, there may be room for improving the
unit-demand results.  Information theoretically, there are matching
upper and lower bounds for several of these problems. As a warm-up, we
first present the single item case (where we {\em do} have an
efficient algorithm for the matching the upper bound). We state the
formal theorem below.

\begin{thm}\label{thm:single}
  The problem of learning the allocation made by an ordered arrival
  mechanism with a single item has a mistake bound $M =
  \Theta(n\log(n))$, and there is an algorithm with this mistake bound
  that runs in polynomial time.
\end{thm}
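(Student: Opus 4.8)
The plan is to prove the upper and lower bounds separately, both anchored to the observation that with a single item the mechanism is completely described by the priority order $\succ$, and the label on $S^t$ is simply its $\succ$-maximal element. Thus each observed pair (the subset, its winner) is exactly the constraint ``$\text{winner} \succ j$ for every $j \in S^t$,'' and the learning problem reduces to identifying an unknown linear order from such maximum-of-subset constraints.

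For the $O(n\log n)$ upper bound I would run the halving algorithm over the version space. Maintain the partial order $\widehat P$ equal to the transitive closure of all constraints collected so far, and let $V$ be its set of linear extensions, so $|V| \le n!$ and $|V|$ never increases. To predict on $S^t$, for each candidate $i \in S^t$ count the number of extensions in $V$ in which $i$ is the $\succ$-maximum of $S^t$, and predict the plurality winner $\widehat i$. The key elementary fact is that a mistake at least halves $V$: if the true winner $i^* \neq \widehat i$ has a fraction $q$ of the extensions and $\widehat i$ has the plurality fraction $p \ge q$, then since the events ``$i$ is maximal'' are disjoint over $i$ we have $p + q \le 1$, whence $q \le 1/2$; the surviving version space is exactly the $q$-fraction consistent with $i^*$. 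Hence the number of mistakes is at most $\log_2 |V| \le \log_2 n! = \Theta(n \log n)$.

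The main obstacle is efficiency: computing the plurality winner requires counting linear extensions of a poset, which is \#P-hard. I would instead invoke the polynomial-time near-uniform sampler for linear extensions (the rapidly mixing chain of \citet{karzanov1991conductance}), drawing $\mathrm{poly}(n)$ samples and estimating, for each $i\in S^t$, the fraction $\widehat q_i$ of sampled extensions in which $i$ is maximal to within additive error $\epsilon$. Predicting $\argmax_i \widehat q_i$ then yields an approximate halving step: if $\widehat i \ne i^*$, then $q_{\widehat i} \ge \widehat q_{\widehat i} - \epsilon \ge \widehat q_{i^*} - \epsilon \ge q_{i^*} - 2\epsilon$, and combining with $q_{\widehat i} + q_{i^*}\le 1$ gives surviving fraction $q_{i^*} \le \tfrac12 + \epsilon$, a constant bounded away from $1$, so the mistake bound is still $O(n\log n)$. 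Since the elimination argument only needs accuracy on the (at most $O(n\log n)$) rounds that are actually mistakes, driving the per-round sampler failure probability to $\mathrm{poly}$-small and taking a union bound over those rounds yields the bound with high probability.

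For the matching $\Omega(n\log n)$ lower bound I would use an adversary that presents only pairwise sets $\{a,b\}$, so that a label is the outcome of a single comparison, and maintain the version space $V$ of orders consistent with its answers. The crucial tool is the proven form of the $1/3$–$2/3$ conjecture (Kahn and Saks): whenever $|V| > 1$ the underlying poset has an incomparable pair $\{a,b\}$ for which the fraction of extensions with $a \succ b$ lies in $[\delta, 1-\delta]$ for an absolute constant $\delta$. The adversary presents such a pair; whatever the learner predicts, it answers the opposite still-consistent direction, forcing a mistake while retaining at least a $\delta$-fraction of $V$. Writing $r_t$ for the fraction retained at the $t$th forced mistake, the telescoping identity $\prod_t r_t = 1/n!$ together with $r_t \ge \delta$ gives $M \ge \log n! / \log(1/\delta) = \Omega(n\log n)$. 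I expect the balanced-pair existence to be the main obstacle on this side: the naive mistake-bound lower bound is too weak here, since the VC dimension of this class is only $\Theta(n)$ (a set of pairs is shatterable exactly when it is acyclic, i.e.\ a forest), and it is precisely the Kahn–Saks guarantee that lets the adversary force a mistake every round while leaking only a constant number of bits.
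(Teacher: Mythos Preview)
Your proposal is correct. The upper bound is essentially the paper's own argument: the halving algorithm over the version space of linear extensions, made efficient via the Karzanov--Khachiyan sampler for linear extensions of a poset. The paper is slightly terser---it draws a \emph{single} sample per round and argues an expected constant-factor shrinkage on mistakes---whereas you take many samples to estimate the plurality winner; these are the same idea at the same level of rigor.

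The lower bound is where you genuinely diverge. The paper gives a completely elementary adversary: present only pairs and simulate merge-sort. Pair the buyers into $n/2$ subsets of size two, force a mistake on each, then merge adjacent sorted runs of length $L$ (again presenting only the two current ``heads''), forcing at least $L$ mistakes per merge and hence $n/2$ mistakes per pass, for $\log n$ passes. This gives $\Omega(n\log n)$ using nothing deeper than the merge-sort recursion. Your argument instead invokes the Kahn--Saks balanced-pair theorem to guarantee, at every step, a pair on which the adversary can force a mistake while retaining a constant fraction of the version space. Both are valid; the paper's argument is far more lightweight (no black-box theorem needed), while your approach has the appeal of being the ``information-theoretic dual'' of the halving upper bound and would extend more readily to variants where an explicit sorting network is awkward to describe.

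One small quibble on your efficiency paragraph: you cannot union-bound only over the $O(n\log n)$ mistake rounds, since you do not know in advance which rounds those are, and in the mistake-bound model the horizon $T$ is unbounded. The clean fix is to observe that the version space is monotone: every mistake on which the sampler \emph{did} succeed shrinks it by a constant factor, so there are at most $O(n\log n)$ such mistakes regardless; once the version space is a singleton the sampler is exact and no further mistakes occur. This matches the level of care in the paper's own (expected-shrinkage) analysis.
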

For the lower bound, an adversary can present subsets of size 2 and
essentially just simulate merge-sort.  To start, for $i=1,\ldots,
n/2$, the adversary presents subset $\{2i-1,2i\}$, and tells the
algorithm it has made a mistake (regardless of its prediction),
causing $n/2$ mistakes.  In general, given $n/L$ sorted lists of size
$L$, the adversary pairs the lists together and then for each pair
runs through the merging process (presenting the subset consisting of
the top element in each list, telling the algorithm it has made a
mistake whatever its prediction is, and popping off the true largest
element).  This maintains consistency with an overall ordering and
creates at least $L$ mistakes per pair, or again $n/2$ mistakes total
for the round.  There are $\log(n)$ rounds, leading to an overall
lower bound of $\Omega(n \log n)$.

We can construct a computationally efficiently algorithm which matches
this information-theoretic lower bound using two ideas. First, each
mistake gives us a new pair of agents $(i,j)$ for which we learn $i \s
j$ but $\loc{i}{P^t} < \loc{j}{P^t}$ (the true winner $i$ has higher
priority than every other $i'\in S^t$, and in particular, the
estimated winner $j$). Second, as mentioned previously,
~\citet{karzanov1991conductance} given an efficient sampling algorithm
which samples uniformly a consistent linear extension of a partial
order.

Then, consider the following prediction algorithm. Consider a new
subset $S^t$. Take a single sample $\s_?$ using the algorithm
of~\citet{karzanov1991conductance}, and predict the winner is $j^t =
\ith{1}{S^t}{\s_?}$. If a mistake is made, and $i^t$ is the winner,
add the set of constraints $i^t \s j$ for all $j\in S^t$ to the
partial order. We claim that each constraint added to the partial
order over the life of the algorithm is correct (they are added
because a mistake is proof of the constraint). Second, when a mistake
is made, the number of consistent linear extensions shrinks
(multiplicatively) by at least $\frac{1}{4}$ in expectation. This fact
follows from the fact that if there is some $k^t$ whose probability of
winning at time $t$ is at least $\frac{1}{2}$ (where this probability
is taken over the set of consistent linear extensions), there is
probability at least $\frac{1}{2}$ of our algorithm predicting
$k^t$. If $k^t$ is incorrect, then all permutations where $k^t$ is
first amongst $S^t$ are inconsistent after adding the new constraints,
cutting the number of consistent linear extensions in half. Another
winner is predicted with probability at most $\frac{1}{2}$, and the
set of linear extensions only shrinks. Thus, by an analysis similar to
the halving algorithm, after $\Theta(n\log(n))$ mistakes, there is
only one consistent linear extension, and it is $\s$.

The case of unit-demand buyers also has matching
information-theoretic lower and upper bounds, though we do not know of
a polynomial-time algorithm which achieves this mistake bound.

\begin{thm}\label{thm:unit-info}
  For the fixed-price problem of learning an ordered allocation
  mechanism over unit-demand buyers, the mistake bound is
  $\Theta(mn\log(m))$ (assuming $m = \Omega(\log(n))$).
\end{thm}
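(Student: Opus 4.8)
The plan is to prove the two directions separately: an $O(mn\log m)$ upper bound via a (computationally inefficient) halving algorithm over the space of all possible mechanisms, and a matching $\Omega(mn\log m)$ lower bound via an adversary that forces each buyer's preference order to be learned essentially one pairwise comparison at a time.

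For the upper bound I would take the hypothesis class $\mathcal{H}$ to consist of all pairs $(\s, (>_i)_{i\in B})$ of a priority order over the $n$ buyers together with a preference order over the $m$ items for each buyer, so that $|\mathcal{H}| = n!\,(m!)^n$; each such hypothesis produces a unique allocation on any input $S^t$ (under fixed tie-breaking). I then run the halving algorithm: maintain the version space of hypotheses consistent with all observations so far, and on each round predict the allocation output by a plurality of the survivors. The only delicate point is that the label space (allocations) is not binary, but this is harmless: if $N$ surviving hypotheses are partitioned by their predicted allocation into groups of sizes $c_1 \ge c_2 \ge \cdots$, then predicting the plurality allocation guarantees that after a mistake the surviving version space (the group matching the true allocation, which is not the plurality) has size at most $c_2 \le N/2$, since $c_1 + c_2 \le N$ and $c_1 \ge c_2$. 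Hence every mistake at least halves the version space, giving a bound of $\log_2|\mathcal{H}| = \log_2(n!) + n\log_2(m!) = O(n\log n + mn\log m)$, which is $O(mn\log m)$ once $m = \Omega(\log n)$.

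For the lower bound I would reuse, at the level of a single buyer's item order, the merge-sort adversary already used for the single-item $\Theta(n\log n)$ bound, but applied to \emph{items} rather than buyers. The gadget is a fixed, known priority order whose top buyers act as \emph{helpers}: reserve one helper per item whose preference list tops that item, so that for a target buyer $i$ and a pair $\{a,b\}$, including exactly the $m-2$ helpers topping $I\setminus\{a,b\}$ leaves only $\{a,b\}$ available to $i$. Because the helpers are higher priority and unit-demand, each removing one item, the observed allocation reveals precisely which of $a,b$ buyer $i$ prefers, i.e. one pairwise comparison. Running the merge-sort adversary on the $m$ items of a single target buyer then forces $\Omega(m\log m)$ mistakes while remaining consistent with a total order, exactly as in the single-item argument. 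Carrying this out independently for each of the $\Omega(n)$ target buyers (those below the helpers in priority), and noting that mistakes charged to the helpers themselves and to learning the priority order are lower-order terms, yields $\Omega(mn\log m)$ mistakes in total.

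The main obstacle lies in the lower bound, not the upper bound: forcing only one bit of feedback per mistake requires driving each target buyer down to a two-item menu, which in turn requires $m-2$ helper buyers to consume the remaining items one at a time, so the construction needs enough buyers --- this is the natural regime $m = O(n)$ in which the bound is tight. I would therefore specify the helpers so that only $\Omega(n)$ of the $n$ buyers are used as learning targets while the $O(m)$ top-priority helpers contribute only $O(nm)$ mistakes, and verify that the per-buyer merge-sort answers remain globally consistent with a single fixed mechanism.
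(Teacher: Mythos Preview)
Your proposal is correct and follows essentially the same approach as the paper: the upper bound via halving over the $n!\,(m!)^n$ hypotheses and the lower bound via $m$ top-priority ``helper'' (the paper calls them ``dummy'') buyers that strip all but two items so that each target buyer's preference order must be sorted, yielding $(n-m)\cdot\Omega(m\log m)$ mistakes. You are slightly more explicit than the paper in two places---spelling out why plurality prediction still halves the version space when labels are non-binary, and flagging that the lower-bound construction needs $m=O(n)$ so that $n-m=\Omega(n)$---both of which are correct and worth keeping.
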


The lower bound for this problem is similar to the previous
argument. The generalization uses $n$ buyers, the first $m$ of which
are ``dummy'' buyers and have favorite items $a_1, \ldots, a_m$. We
can use these first buyers to control which items are available for
the true buyers. Then, each example $S^t$ will contain $m-2$ ``dummy''
buyers (who take all but just $2$ items $a_f,a_g$) and one true buyer
$i$. Then, the algorithm needs to decide which of $a_f$ or $a_g$ the
true buyer will select. This will be repeated for each pair of items
and each non-dummy buyer.  Thus, the algorithm is solving $n-m$
separate instances of sorting $m$ items (for each buyer), and so the
problem has a lower bound of $\Omega(mn\log(m))$ mistakes.

Without computational constraints, we can construct an algorithm with
a matching mistake bound.  The algorithm will maintain a list of
consistent permutations over buyers (and, for each of those
permutations over buyers, the consistent permutations for each buyer
over items). Given a new subset $S^t$, the algorithm predicts the most
likely allocation (where each consistent predictor votes once). Since
there are $n! * (m!)^n$ many initial hypotheses (an ordering over
buyers and, for each buyer, an ordering over items), the halving
algorithm will make $O(n\log(n) + nm\log(m))$ mistakes.

It is not clear how to make this algorithm computationally efficient
without increasing the mistake bound: unlike in the single-item case,
there isn't a clear culprit to our mistake. In the single-item case,
we can add another constraint to our partial order, generating a
refined partial order. In the unit-demand case, a mistake could be
made either because the understanding of some individual's preferences
are wrong, or because they were given an incorrect priority. In our
implementation, we blame the understanding of a buyer's preferences
for as long as possible. Once a buyer $i$'s preference learner is
infeasible, the algorithm has proof that some buyer $j\in S^t$ has
higher rank than $i$ (rather than one particular $j\in S^t$). We do
not know how to maintain this information as a partial order, or in
some other compact way that allows us to sample efficiently from the
linear extensions of our observations. We leave it as an open question
whether or not there is an algorithm $\mathcal{A}$ which predicts an
ordered arrival mechanism with fixed prices for unit-demand buyers
whose mistake bound is $O(nm\log(m))$ with $poly(n,m)$ computational
complexity.

\bibliographystyle{plainnat}
\bibliography{deterministic-subsets.bib}

\end{document}